\documentclass[runningheads]{llncs}
\usepackage[T1]{fontenc}
\usepackage{graphicx}
\usepackage[noend,linesnumbered]{algorithm2e}
\usepackage{amsfonts}
\usepackage{booktabs}

\usepackage{prettyref}
\newrefformat{li}{line~\ref{#1}}
\newrefformat{def}{Definition~\ref{#1}}
\newrefformat{eqn}{Equation~\ref{#1}}
\newrefformat{thm}{Theorem~\ref{#1}}
\newrefformat{lem}{Lemma~\ref{#1}}
\newrefformat{alg}{Algorithm~\ref{#1}}
\newrefformat{fig}{Figure~\ref{#1}}
\newrefformat{tab}{Table~\ref{#1}}
\newrefformat{app}{Appendix~\ref{#1}}
\newcommand\pref[1]{\prettyref{#1}}

\newcommand\approachname{scapegoating size descent}
\newcommand\approach{\approachname{}}
\newcommand\Approach{Scapegoating size descent}
\newcommand\APPROACH{Scapegoating Size Descent}
\newcommand\toolname{\textsc{Shrinker}}
\newcommand\tool{\textsc{Shrinker}}

\usepackage{mathtools}
\DeclarePairedDelimiter\abs{\lvert}{\rvert}%

\usepackage{caption}
\usepackage{subcaption}

\usepackage{listings}
\lstset{language=C}
\lstset{numbers=left}
\lstset{xleftmargin=3mm}
\lstset{frame=leftline}
\lstset{numberstyle=\small}
\lstset{texcl=true}
\lstset{basicstyle=\small}

\lstset{numberblanklines=false,escapeinside=||}

\lstset{numbers=left,numberblanklines=false,escapeinside=||}
\let\origthelstnumber\thelstnumber
\makeatletter
\newcommand*\Suppressnumber{%
  \lst@AddToHook{OnNewLine}{%
    \let\thelstnumber\relax%
     \advance\c@lstnumber-\@ne\relax%
    }%
}

\newcommand*\Reactivatenumber[1]{%
  \setcounter{lstnumber}{\numexpr#1-1\relax}
  \lst@AddToHook{OnNewLine}{%
   \let\thelstnumber\origthelstnumber%
   \refstepcounter{lstnumber}%
  }%
}

\usepackage{tikz}

\newcommand\dashto\dashrightarrow
\usepackage{pgfplots}

\usepackage{pifont}
%
%


\tikzstyle{wbg}=[fill=white,fill opacity=0.9,text opacity=1]

\lstset{emph={%
    fail,assume,assert
    },emphstyle={\bfseries}%
}%

\newcommand\citet\cite
\newcommand\citep\cite
\usepackage{hyperref}

\newcommand\red[1]{{\color{red}#1}}
\newcommand\redcal[1]{{\color{red}\mathcal{#1}}}

\newcommand\blue[1]{{\color{blue}#1}}
\newcommand\pur[1]{{\color{purple}#1}}

\newcommand\teal[1]{{\color{teal}#1}}

\newcommand\cStep{\blue{\mathrm{Step}}}

\newcommand\trsize[1]{\langle #1 \rangle}

\newcommand\trCanFail{\red{\mathrm{CanFail}}}
\newcommand\trMorePrecise{\red{\mathrm{MorePrecise}}}
\newcommand\trWiden{\red{\mathrm{Widen}}}
\newcommand\trStep{\red{\mathrm{Step}^\sharp}}
\newcommand\trSplit{\red{\mathrm{Split}}}
\newcommand\trgamma{\blue{\gamma^T}}
\newcommand\tra{\red{a}}
\newcommand\trb{\red{b}}

\newcommand\hSingle{\teal{\mathtt{HSingle}}}
\newcommand\hUpdate{\teal{\mathtt{HUpdate}}}
\newcommand\hDrop{\teal{\mathtt{HDrop}}}
\newcommand\hStep{\teal{\mathtt{HStep}}}
\newcommand\hgamma{\teal{\gamma^H}}

\newcommand\purcal[1]{{\color{purple}\mathcal{#1}}}

\newcommand\hCanFail{\pur{\mathtt{HCanFail}}}

\newcommand\hh{\teal{h}}

\newcommand\hStepSharp{\pur{\mathtt{HStep}^\sharp}}
\newcommand\hStepE{\pur{\mathtt{HStep}^\sharp_\exists}}
\newcommand\hMorePrecise{\pur{\mathtt{HMorePrecise}}}
\newcommand\hWiden{\pur{\mathtt{HWiden}}}
\newcommand\hSplit{\pur{\mathtt{HSplit}}}
\newcommand\hMaybeAddScapegoats{\pur{\mathtt{MaybeAddScapegoats}}}
\newcommand\hCanBlame{\pur{\mathtt{CanBlame}}}

\newcommand\hidx[2]{\blue{#1[}#2\blue{]}}

\usepackage[frozencache,cachedir=minted-cache]{minted}
\setminted{linenos,xleftmargin=8mm}

\usepackage{diagbox}

\newcommand\Vtt[1]{\texttt{\_\_VERIFIER\_#1}}

\begin{document}
\title{Automated Verification of Monotonic Data Structure Traversals in C}
%
%
\author{Matthew Sotoudeh}
\authorrunning{Matthew Sotoudeh}
\institute{Stanford University \\ \email{sotoudeh@stanford.edu}}
\maketitle              
\begin{abstract}
    Bespoke data structure operations are common in real-world C code.
    We identify one common subclass, \emph{monotonic data structure traversals}
    (MDSTs), that iterate monotonically through the structure.
    For example, \texttt{strlen} iterates from start to end of a character
    array until a null byte is found, and a binary search tree \texttt{insert}
    iterates from the tree root towards a leaf.
    We describe a new automated verification tool, \tool{}, to verify MDSTs
    written in C.
    %
    %
    \tool{} uses a new program analysis strategy called \emph{\approach{}},
    %
    which is designed to take advantage of the fact that many MDSTs produce
    very similar traces when executed on an input (e.g., some large list) as
    when executed on a `shrunk' version of the input (e.g., the same list but
    with its first element deleted).
    We introduce a new benchmark set containing over one hundred instances
    proving correctness, equivalence, and memory safety properties of dozens of
    MDSTs found in major C codebases including Linux, NetBSD, OpenBSD, QEMU,
    Git, and Musl.
    \tool{} significantly increases the number of monotonic string and list
    traversals that can be verified vs.\ a portfolio of state-of-the-art
    tools.

    \keywords{Verification \and Data Structures \and Small Model Property.}
\end{abstract}
\section{Introduction}
The C language's lack of generics and focus on performance encourages bespoke,
application-specific data structures.
%
Bugs in these data structures can threaten safety and correctness of the entire
codebase.
Hence, we desire a tool to automatically prove the correctness of such data
structure code written in C.

This paper focuses on a subclass of data structure code we call
\emph{monotonic data structure traversals} (MDSTs).
MDSTs are programs that take finitely many monotonic
sweeps through the structure, where each sweep starts at some root or head
element and moves forward on each loop iteration\footnote{We give no strict
definition of MDST; it is merely intuition guiding the design of our analysis.
Our tool remains sound (but incomplete) when applied to any C program.}.
Examples of MDSTs include classic implementations of \texttt{strlen} (start at
the first character and iterate forward until a null byte is found),
\texttt{list-search} (start at the head and iterate forward until the desired
element is found), and \texttt{bst-insert} (start at the root and iterate down
until a null pointer is found, then insert the new node).

\subsubsection{Benchmarks and Empirical Results}
Existing benchmarks sets are either not focused on MDSTs, or involve crafted
benchmarks that are not necessarily representative of real-world code.
Hence, we constructed a new program verification benchmark consisting of over
one hundred instances verifying temporal memory safety, spatial memory safety,
and correctness properties of dozens of MDSTs extracted from major C projects.
For example, one instance checks that the Linux and OpenBSD implementations of
\texttt{strcmp} return numbers with the same sign for every pair of input
strings; another checks that appending to a GNOME list increases its length by
one.

Our tool, \tool{}, nearly triples the number of string instances solved
(58~vs.~20) and more than doubles the number of list instances solved
(20~vs.~9) compared to the second-best solver.
\tool{} solves the second-most number of tree instances among the tools
evaluated, including one not solved by any other tool.
Our results indicate \tool{} would make a strong addition to a portfolio solver
and can significantly improve the state-of-the-art in verifying string and list
MDSTs.

\subsubsection{\APPROACH}
\tool{} is based on our new \emph{\approach{}} technique for verifying safety
properties, i.e., that no execution trace of a given input program crashes
(dereferences null, makes a false assertion, etc.).

Traditional program verifiers execute the program on all possible inputs at
once, tracking sets of possible program execution traces.
If fixedpoint is reached without any of the sets containing a crashing trace,
the verifier can conclude that the program is safe.
Because most programs have infinitely many possible traces, to ensure
termination the verifier must overapproximate the set of possible traces.
E.g., rather than record that there are traces where a certain variable might
have values $0$, or $2$, or $4$, \ldots, the verifier might track only that the
value is nonnegative.
While needed to make the verifier terminate, this overapproximation can make
the abstract interpreter think a crash might be possible even when it is not.

\Approach{} gives the verifier a new option: when it finds an overapproximated
trace that might crash, instead of giving up and reporting a potential error,
it is allowed to instead prove that, \emph{if} there is a reachable crashing
trace of this form, then there \emph{also} exists some strictly smaller
reachable trace that also crashes.
%
In other words, the verifier establishes that, for every possible program
execution trace either: (1)~the trace does not crash, or (2)~if the trace
crashes, then there exists some smaller trace that crashes as well.
%
Together, these facts constitute a proof by infinite descent that no trace
crashes.

Our verification tool, \tool{}, is based on these ideas.
Instead of running the program on a single abstract input, it runs two (or
more) copies of the program side-by-side, one on an abstract input~$x$ and
another on a \emph{shrunk} version~$x'$ of that input.
For example,~$x$ might be a nonempty linked list, and~$x'$ might be formed by
dropping the first node in~$x$.
Any time the abstract trace executing on~$x$ potentially crashes (reaches a
failure state), \tool{} merely needs to prove that the `scapegoat trace'
executing on~$x'$ also crashes and is smaller.

\subsubsection{Why Does It Work for MDSTs?}
The basic difficulty in automated verification of heap-manipulating programs is
that the heap can be arbitrarily large, so the verifier must track facts
involving an unknown number of values.
\Approach{} can sidestep this problem because many MDSTs do almost the same
thing when run on an input $x$ as when run on a shrunk version of that input
$x'$.
Consider a loop over a linked list: other than the very first iteration, every
subsequent iteration does exactly the same thing when run on a list $x$ as when
run on the tail list $x' = x.\mathtt{next}$ formed by dropping the first element in the
list.
Thus, \tool{} only needs to track precise facts about the finitely many memory
locations that actually differ between the executions on $x$ and $x'$.
%

%

\subsubsection{Contributions and Outline}
We make the following contributions:
\begin{enumerate}
    \item \emph{\Approach{}} framework for program analysis~(\pref{sec:Theory}).
    \item \tool{} tool for automated verification of C programs~(\pref{sec:Implementation}).
    \item Evaluation of \tool{} and multiple baseline verifiers on our new
        benchmark set of MDSTs extracted from major C projects~(\pref{sec:Evaluation}).
\end{enumerate}
\pref{sec:Background} gives preliminaries, \pref{sec:Motivating} works through
a motivating example, and~\pref{sec:Related} describes related work.
\pref{app:Limitations} describes limitations, future work, and a motivating
connection to the small scope hypothesis.
The \tool{} homepage is located at \url{https://lair.masot.net/shrinker/} and
an archival version is located at \url{https://doi.org/10.5281/zenodo.15225947}.

\section{Preliminaries and Traditional Abstract Interpretation}
\label{sec:Background}
We now formalize our notion of a program and what it means for a program
to be safe, then describe a variant of abstract interpretation,
which~\pref{sec:Theory} builds on to form \approach{} as used by \tool{}.
In addition to distinguishing names, we use \blue{blue} for concrete
states/traces and \red{red} for abstract.

\subsection{Preliminary Definitions}
\label{sec:Program}
We model the program to be verified as a transition relation on uninterpreted
states.
We make no formal assumption about what a state is, but in practice it
represents the state of the registers and heap at a given point during program
execution.
For the duration of this paper we assume a single, fixed program.
\begin{definition}
    \label{def:Program}
    We assume the program to be verified is defined by a \emph{transition
    relation} $\to$ on states: $\blue{s_1} \to \blue{s_2}$ means ``state
    $\blue{s_1}$ can transition to state $\blue{s_2}$ in one program step.''
    A \emph{trace} $\blue{s_1}, \ldots, \blue{s_n}$ is a sequence of states.
    We assume the verification conditions are specified by a (possibly
    infinite) set of \emph{initial traces}~$\blue{I}$, each of length 1, and a
    (possibly infinite) set of \emph{failure traces} $\blue{F}$.
\end{definition}
\tool{} automatically extracts the program relation $\to$, initial
traces~$\blue{I}$, and failure traces $\blue{F}$ from C code.
We use the terms `fails' and `crashes' equivalently in this paper.
We distinguish between \emph{traces} (any sequence of states) and
\emph{reachable traces} (those that can actually result from an execution of
the program).
\begin{definition}
    \label{def:Safety}
    A trace $\blue{s_1}, \ldots, \blue{s_n}$ is a \emph{reachable trace} if
    every step is a valid program transition (i.e., $\blue{s_1} \to \blue{s_2}
    \to \cdots \to \blue{s_n}$) and the singleton prefix of the trace (i.e.,
    just $\blue{s_1}$) is in the set of initial traces $\blue{I}$.
    We use $\blue{R}$ to notate the set of reachable traces.
    The program to be verified is \emph{safe} if $\blue{R} \cap \blue{F} =
    \emptyset$.
\end{definition}
Finally, we introduce notation for executing the program for one additional
timestep, i.e., extending traces by one state.
We allow nondeterminism, so the result will be a set of possible subsequent
traces.
\begin{definition}
    Given a trace $\blue{t} = (\blue{s_1}, \blue{s_2}, \ldots, \blue{s_n})$,
    $\cStep(\blue{t})$ is the possible traces reachable after
    one timestep, i.e., $\cStep(\blue{t}) = \{ (\blue{s_1}, \blue{s_2}, \ldots,
    \blue{s_n}, \blue{s_{n+1}}) \mid \blue{s_n} \to \blue{s_{n+1}} \}$.
\end{definition}

\subsection{Trace Abstractions}
\label{sec:TraceAbstraction}
Real computers are finite, but abstract interpreters must reason about a
potentially infinite number of possible program traces.
Hence, we need a finite representation of infinite sets of traces.
This representation is formalized as an \emph{abstract domain}~\cite{cousot77}.
This paper only uses abstract domains as a representation of infinite sets, and
we do not place many requirements on our abstract domain (e.g., we do not
require a Galois connection).

\begin{definition}
    An \emph{abstract trace domain} $\redcal{A^T}$ is a set of \emph{abstract
    traces} along with a \emph{concretization function} $\blue{\gamma^T}$ that
    maps abstract traces to sets of traces.
\end{definition}
The concretization function is merely used for the theoretical results: it need not
be implemented or even implementable.
We make no other formal assumption about the abstract traces.
In practice, they usually contain constraints about states in the trace, e.g.,
``the value of variable $i$ at the last state in the trace is positive,'' and
the concretization function $\blue{\gamma^T}$ gives the set of all traces
satisfying those constraints.
For the abstract interpreter to construct, manipulate, and reason about
abstract domain elements, the analysis designer must implement:
\begin{enumerate}
    \item $\red{I^\sharp}$: overapproximates the possible initial traces, i.e.,
        $\blue{I} \subseteq \blue{\gamma^T}(\red{I^\sharp})$
    \item $\trCanFail(\tra)$: tests for possible failure traces; must be true
        if $\blue{\gamma^T}(\tra) \cap \blue{F} \neq \emptyset$.
    \item $\trStep(\tra)$: applies $\cStep$ to all of the represented traces,
        i.e., for any $\blue{t} \in \blue{\gamma^T}(\tra)$ and $\blue{t'} \in
        \cStep(\blue{t})$, we have $\blue{t'} \in \trgamma(\trStep(\tra))$.
    \item $\trMorePrecise(\tra, \trb)$: true only when
        $\blue{\gamma^T}(\red{a}) \subseteq \blue{\gamma^T}(\trb)$.
    \item $\trWiden(\tra)$: introduces overapproximations to ensure termination;
        it can return anything as long as $\trMorePrecise(\tra,
        \trWiden(\tra))$.
    \item $\trSplit(\tra)$: splits a set of traces into subsets, often to
        introduce flow-, path-, or context-sensitivity into the analysis; it
        returns a list of abstract traces $\red{a'_1}, \red{a'_2}, \ldots,
        \red{a'_n}$ such that $\blue{\gamma^T}(\tra) \subseteq \bigcup_i
        \blue{\gamma^T}(\red{a'_i})$.
\end{enumerate}
%
The tool designer can instantiate this framework with different choices to
reach different points on the completeness--performance--termination tradeoff
curve, but as long as the above constraints are met soundness is guaranteed.

\subsection{Variant of Traditional Abstract Interpretation}
\label{sec:AbsInt}
\pref{alg:AbsInt} shows an automated verification algorithm based on the
traditional abstract interpretation framework.
It repeatedly calls $\trStep$ to explore the set of reachable traces.
If $\trCanFail$ reports that any one might involve a failure trace, it reports
a possible error.
Otherwise, once fixedpoint is reached, the program is guaranteed to be safe.
$\trWiden$ and $\trMorePrecise$ are used to encourage convergence, while
$\trSplit$ is used to case split abstract traces to improve precision.

\begin{algorithm}[t]
    \caption{Variant of Traditional Abstract Interpretation}\label{alg:AbsInt}
    \KwData{A program~(\pref{sec:Program}) and an abstract trace
    domain~(\pref{sec:TraceAbstraction}).}
    \KwResult{\textsc{Safe} if the program is definitely safe, or \textsc{Unknown}.}
    $\mathtt{worklist} \gets \{\red{I^\sharp}\}, \quad \mathtt{seen} \gets \{ \}$\;
    \While{$\mathrm{worklist}$ is not empty}{
        $\tra \gets \mathtt{worklist.pop()}$\;\label{li:AbsIntIterate}
        \lIf{$\trCanFail(\tra)$}{\label{li:AbsIntCheck}
            \Return{$\mathtt{Unknown}$}
        }
        $\mathtt{seen.add}(\tra)$\;\label{li:AbsIntAddToSeen}
        \ForEach{$\red{a'_i} \in \trSplit(\trStep(\red{a}))$}{\label{li:AbsIntInnerIter}
            $\red{a'_i} \gets \trWiden(\red{a'_i})$\;
            \If{there exists $\red{b} \in \mathtt{seen} \cup \mathtt{worklist}$ with $\trMorePrecise(\red{a'_i}, \trb)$}{
                \textbf{continue}\;\label{li:AISkip}
            }
            $\mathtt{worklist} \gets (\mathtt{worklist} \setminus
                \{ \trb \in \mathtt{worklist} \mid \trMorePrecise(\trb, \red{a'_i}) \}) \cup \{ \red{a'_i} \}$\;\label{li:AIPush}
        }
    }
    \Return{$\mathtt{Safe}$}\;
\end{algorithm}

The key~\pref{lem:AbsInt} guarantees that every reachable trace is represented
by some abstract trace processed on an iteration of the main loop
in~\pref{alg:AbsInt}.

\begin{lemma}
    \label{lem:AbsInt}
    If the algorithm returns \texttt{Safe}, then for any reachable trace $\blue{t}$
    there exists some abstract trace $\tra \in \mathtt{seen}$ with $\blue{t}
    \in \blue{\gamma^T}(\red{a})$.
\end{lemma}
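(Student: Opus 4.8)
The plan is to prove the lemma by induction on the length of the reachable trace $\blue{t}$, showing that every reachable trace is represented by (i.e., lies in the concretization of) some abstract trace that the algorithm actually adds to $\mathtt{seen}$. The central invariant I would maintain is this: at the point when the algorithm returns \texttt{Safe} (so the worklist is empty), for every reachable trace $\blue{t}$ there is some abstract trace $\tra$ that was added to $\mathtt{seen}$ with $\blue{t} \in \blue{\gamma^T}(\tra)$. The key soundness properties I will lean on are the specifications from \pref{sec:TraceAbstraction}: $\blue{I} \subseteq \blue{\gamma^T}(\red{I^\sharp})$ for initial traces, the overapproximation guarantee of $\trStep$, the covering property of $\trSplit$, the monotonicity of $\trWiden$ under $\trMorePrecise$, and the transitivity of $\trMorePrecise$ (which follows from its definition in terms of $\subseteq$).

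First I would establish the base case: a reachable trace of length $1$ is, by \pref{def:Safety}, a singleton in $\blue{I}$, hence contained in $\blue{\gamma^T}(\red{I^\sharp})$. Since $\red{I^\sharp}$ is the sole initial element of the worklist and the algorithm returned \texttt{Safe}, $\red{I^\sharp}$ must have been popped and added to $\mathtt{seen}$ (it cannot have triggered $\trCanFail$, else we would have returned \texttt{Unknown}). For the inductive step, suppose $\blue{t} = (\blue{s_1}, \ldots, \blue{s_{n+1}})$ is reachable. Its prefix $\blue{t^-} = (\blue{s_1}, \ldots, \blue{s_n})$ is reachable, so by the inductive hypothesis there is $\tra \in \mathtt{seen}$ with $\blue{t^-} \in \blue{\gamma^T}(\tra)$. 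Because $\blue{s_n} \to \blue{s_{n+1}}$, we have $\blue{t} \in \cStep(\blue{t^-})$, so by the $\trStep$ specification $\blue{t} \in \trgamma(\trStep(\tra))$. By the covering property of $\trSplit$, some split child $\red{a'_i}$ satisfies $\blue{t} \in \blue{\gamma^T}(\red{a'_i})$, and after $\trWiden$ we still have $\blue{t} \in \blue{\gamma^T}(\trWiden(\red{a'_i}))$ by monotonicity. Call this widened element $\red{c}$.

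The remaining task is to show $\blue{t}$ is covered by something in $\mathtt{seen}$, which is where the main obstacle lies: $\red{c}$ is not necessarily added to the worklist, because the algorithm may \textbf{continue} past it at \pref{li:AISkip} when some $\trb \in \mathtt{seen} \cup \mathtt{worklist}$ already satisfies $\trMorePrecise(\red{c}, \trb)$. In that case $\blue{\gamma^T}(\red{c}) \subseteq \blue{\gamma^T}(\trb)$, so $\blue{t} \in \blue{\gamma^T}(\trb)$, and $\blue{t}$ is covered by $\trb$ instead; if $\trb$ is in $\mathtt{seen}$ we are done immediately, and if it is in the worklist it will eventually be popped and processed. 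If no such $\trb$ exists, $\red{c}$ is pushed onto the worklist at \pref{li:AIPush}. The subtlety is that elements in the worklist may still be removed later (either popped, or pruned by the set-difference at \pref{li:AIPush} when a more general element displaces them), so the coverage of $\blue{t}$ can migrate across several abstract traces during execution. To handle this cleanly I would argue not about a fixed moment but about the terminal state: since the algorithm returned \texttt{Safe}, the worklist is empty at termination, so by a secondary induction (tracking that whenever an element covering $\blue{t}$ is popped or pruned, another element covering $\blue{t}$ either already sits in $\mathtt{seen}$ or is inserted) every trace covered by a worklist element is ultimately covered by a $\mathtt{seen}$ element. The pruning at \pref{li:AIPush} preserves coverage precisely because it only removes $\trb$ when $\trMorePrecise(\trb, \red{a'_i})$ holds, i.e., $\blue{\gamma^T}(\trb) \subseteq \blue{\gamma^T}(\red{a'_i})$, so any trace lost from $\trb$ is retained by the replacement $\red{a'_i}$. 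Establishing that this migration argument terminates with coverage landing in $\mathtt{seen}$ is the crux, and I expect it to require carefully phrasing the invariant over the entire run rather than reasoning pointwise.
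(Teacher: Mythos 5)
Your proposal is correct and follows essentially the same route as the paper's proof: induction on trace length, with the inductive step tracking the covering abstract trace through the split/widen pipeline and handling both the skip at \pref{li:AISkip} and the pruning at \pref{li:AIPush} via the semantics of $\trMorePrecise$. You are merely more explicit than the paper about why the covering element eventually lands in $\mathtt{seen}$ (the worklist is empty at termination), which is a fair elaboration rather than a different argument.
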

\begin{proof}
    Induct on the length of $\blue{t} = (\blue{s_1}, \blue{s_2}, \ldots,
    \blue{s_n})$.
    The first iteration handles everything with $n = 1$.
    Otherwise, by inductive hypothesis, the prefix~$\blue{t'} = (\blue{s_1},
    \ldots, \blue{s_{n-1}})$ was added to $\mathtt{seen}$
    on~\pref{li:AbsIntAddToSeen} during some iteration.
    On that iteration, one of the $\red{a'_i}$s must have~$\blue{t} \in
    \trgamma(\red{a'_i})$, which gets added to the worklist
    on~\pref{li:AIPush}, hence processed and added to $\mathtt{seen}$ in a
    future iteration.
    Alternatively, a less-precise $\red{b}$ might have been found
    already~(\pref{li:AISkip}), but then $\blue{t} \in \trgamma(\red{b})$
    already, as desired.
    Finally, $\red{a'_i}$ might be removed from the worklist in a future
    execution of~\pref{li:AIPush}, but that only occurs if something less
    precise (hence also containing $\blue{t}$ in its concretization set) is
    added to replace it.
    \qed
\end{proof}

\begin{theorem}
    \label{thm:AbsInt}
    If \pref{alg:AbsInt} reports \texttt{Safe}, then the program is safe.
\end{theorem}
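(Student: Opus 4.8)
The plan is to argue by contradiction, using \pref{lem:AbsInt} together with the soundness contract on $\trCanFail$. Suppose \pref{alg:AbsInt} reports \texttt{Safe} but the program is not safe. Then by \pref{def:Safety} we have $\blue{R} \cap \blue{F} \neq \emptyset$, so I may fix some trace $\blue{t}$ that is simultaneously reachable and a failure trace, i.e., $\blue{t} \in \blue{R} \cap \blue{F}$.

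First I would invoke \pref{lem:AbsInt} on this $\blue{t}$: since the algorithm returned \texttt{Safe}, the lemma hands me an abstract trace $\tra \in \mathtt{seen}$ with $\blue{t} \in \trgamma(\tra)$. Because $\blue{t}$ is also a failure trace, this exhibits a common element of $\trgamma(\tra)$ and $\blue{F}$, so $\trgamma(\tra) \cap \blue{F} \neq \emptyset$; the specification of $\trCanFail$ then forces $\trCanFail(\tra)$ to be true. Next I would examine how $\tra$ entered $\mathtt{seen}$. The only statement that adds to $\mathtt{seen}$ is~\pref{li:AbsIntAddToSeen}, which on any iteration is reached only after the guard on~\pref{li:AbsIntCheck} has been evaluated and found false. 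But we just argued $\trCanFail(\tra)$ holds, so on the iteration that popped $\tra$ the algorithm would have returned \texttt{Unknown} at~\pref{li:AbsIntCheck} instead of proceeding to~\pref{li:AbsIntAddToSeen}. This contradicts $\tra \in \mathtt{seen}$, so no such $\blue{t}$ exists and $\blue{R} \cap \blue{F} = \emptyset$, i.e., the program is safe.

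I expect this proof to be short, since \pref{lem:AbsInt} already carries the only real weight (the inductive argument that every reachable trace is covered by some element of $\mathtt{seen}$). The remaining work is just to chain the $\trCanFail$ soundness guarantee to the control flow of the loop, and I do not anticipate a genuine obstacle. The one point meriting care is the observation that membership in $\mathtt{seen}$ implies $\tra$ passed the $\trCanFail$ test, which is immediate because~\pref{li:AbsIntAddToSeen} is dominated by the early return on~\pref{li:AbsIntCheck}.
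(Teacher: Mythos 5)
Your proposal is correct and follows exactly the same route as the paper's proof: contradiction via \pref{lem:AbsInt} to obtain $\tra \in \mathtt{seen}$ containing the reachable failure trace, then observing that membership in $\mathtt{seen}$ forces $\trCanFail(\tra)$ to be false, contradicting its specification. Your version merely spells out the control-flow reasoning in more detail.
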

\begin{proof}
    Otherwise there must be a reachable trace $\blue{t} \in \blue{F}$, hence
    by~\pref{lem:AbsInt} there is a $\tra \in \mathtt{seen}$ with $\blue{t} \in
    \trgamma(\tra)$.
    But everything added to $\mathtt{seen}$ passed the check
    on~\pref{li:AbsIntCheck}, i.e., $\trCanFail(\tra)$ is false, contradicting
    the definition of $\trCanFail$.
    \qed
\end{proof}

\section{Motivating Example}
\label{sec:Motivating}
This section works through a concrete example showing how traditional abstract
interpretation~(\pref{sec:AbsInt}) with a precise enough abstract domain can
prove correctness of a simple heap manipulating program.
We then describe some issues that make this difficult to do reliably and sketch
how our technique, \approach{} (\pref{sec:Theory}), would approach the same
verification task.
Consider the code below, where we want to prove the
\texttt{\_\_VERIFIER\_fail()} call is unreachable.
\begin{minted}{C}
struct arr { int *data; int n_data; };
void test(struct arr arr) {
    for (int i = 0; i < arr.n_data; i++)
        arr.data[i] = 0;
    for (int i = 0; i < arr.n_data; i++)
         if (arr.data[i] != 0)
             __VERIFIER_fail(); }
\end{minted}
The example is simplified for expository purposes, ignoring techniques like
loop fusion that can solve this instance but do not generalize as well.
For space reasons we are somewhat informal; see~\pref{app:Worked} for a more
complete worked example.

\subsection{Traditional Abstract Interpretation (\pref{alg:AbsInt})}
Recall that \pref{alg:AbsInt} explores sets of possible program traces (each
set represented by an abstract trace $\red{a_i}$) and checks that none includes
a failing trace (i.e., one reaching line 7).
On termination, \pref{lem:AbsInt} guarantees that every reachable trace lies in
the concretization set of one of those abstract traces added to \texttt{seen}.
The exact behavior depends on the abstraction used, but below we have
visualized one possible result.
Each node represents an abstract trace in the final \texttt{seen} set.
An edge $\red{a_i} \to \red{a_j}$ means $\red{a_j}$ was added to the worklist
while processing $\red{a_i}$, i.e., applying $\cStep$ to a trace in
$\trgamma(\red{a_i})$ might result in a trace in $\trgamma(\red{a_j})$.
\begin{center}
    \begin{tikzpicture}
        \node (A1) at (0, 0) {$\red{a_1} = \red{I^{\sharp}}$};
        \node (A2) at (2, 1) {$\red{a_2}$};
        \node (A3) at (2, 0) {$\red{a_3}$};
        \node (A4) at (4, 1) {$\red{a_4}$};
        \node (A5) at (6, 1) {$\red{a_5}$};
        \node (A6) at (4, 0) {$\red{a_6}$};
        \node (A7) at (6, 0) {$\red{a_7}$};
        \node (A8) at (8, 0) {$\red{a_8}$};

        \draw[->] (A1) -- (A2);
        \draw[->] (A1) -- (A3);
        \draw[->] (A3) -- (A4);
        \draw[->] (A4) -- (A5);
        \draw[->] (A3) -- (A6);
        \path[->] (A6) edge [loop below] (A6);
        \draw[->] (A6) -- (A7);
        \draw[->] (A7) -- (A8);
        \path[->] (A8) edge [loop below] (A8);
    \end{tikzpicture}
\end{center}
Below, we describe each abstract trace as a set of constraints.
The concretization set consists of every trace satisfying those constraints.
We assume executing lines~3 and 5 checks the corresponding loop condition and
either executes the loop body or exits the loop.
%
\begin{itemize}
    \item $\red{a_1}$:
        About to execute line 3.
        \texttt{arr.data} points to \texttt{arr.n\_data} integers, \texttt{i=0}

    \item $\red{a_2}$:
        About to execute line 5.
        \texttt{arr.n\_data=0}, \texttt{i=0}

    \item $\red{a_3}$:
        About to execute line 3.
        \texttt{arr.n\_data>=1}, \texttt{arr.data[0]=0}, \texttt{i=1}

    \item $\red{a_4}$:
        About to execute line 5.
        \texttt{arr.n\_data=1}, \texttt{arr.data[0]=0}, \texttt{i=0}

    \item $\red{a_5}$:
        About to execute line 5.
        \texttt{arr.n\_data=1}, \texttt{arr.data[0]=0}, \texttt{i=1}

    \item $\red{a_6}$:
        About to execute line 3.
        \texttt{arr.n\_data>1}, \texttt{2<=i<=arr.n\_data}, \\
        \texttt{arr.data[0]=0}, \ldots, \texttt{arr.data[i-1]=0}

    \item $\red{a_7}$:
        About to execute line 5.
        \texttt{arr.n\_data>1}, \texttt{i=0}, \\
        \texttt{arr.data[0]=0}, \ldots, \texttt{arr.data[arr.n\_data-1]=0}

    \item $\red{a_8}$:
        About to execute line 5.
        \texttt{arr.n\_data>1}, \texttt{1<=i<=arr.n\_data}, \\
        \texttt{arr.data[0]=0}, \ldots, \texttt{arr.data[arr.n\_data-1]=0}

\end{itemize}

\pref{alg:AbsInt} can prove that traces represented by $\red{a_8}$ never reach
the crash on line 7, because all of those traces satisfy \texttt{arr.data[0] =
arr.data[1] = ... = arr.data[arr.n\_data-1] = 0}, so the \texttt{if} condition
is never taken.
This analysis, however, \textbf{requires the abstract domain to reason about
constraints involving an unknown number of memory locations}, specifically the
constraints asserting that some subset of \texttt{arr.data} entries are zero
(the ``\ldots''s in the above constraints for $\red{a_6}$, $\red{a_7}$, and
$\red{a_8}$).
If the abstract domain used was not able to represent such constraints, the
analyzer would report a false positive because it would not be able to
guarantee that the \texttt{if} condition on line 6 is not taken.

Some abstract domains can handle constraints like this~\cite{grass}, but the
larger search space makes automatically synthesizing useful invariants harder
than when restricted to only constraints that involve a finite number of memory
locations.
By contrast, the abstract traces for $\red{a_1}$ through $\red{a_5}$, which
only constrain the values of finitely many memory locations, tend to be simpler
to reason about and synthesize.
\textbf{The key goal of \approach{} is to avoid having to track precise
constraints about an unknown number of memory locations.}
Instead, we want to only track constraints about the (often finitely many)
memory locations that \emph{differ} when executing on a full input vs.\ some
related, smaller input.

\subsection{\APPROACH{}}
At a high level, our \approach{} approach also explores sets of program traces
that together account for every possible reachable trace and checks whether
they reach failure.
In fact, its handling of the traces with inputs of size 0 or 1 (i.e.,
$\red{a_1}$ through $\red{a_5}$) is essentially identical to traditional
abstract interpretation: we track constraints on the finitely many memory
locations \texttt{arr.n\_data}, \texttt{arr.data[0]}, and \texttt{i} to verify
that line 7 is never reached on any such small-sized input.
The difference comes in handling $\red{a_6}$ through $\red{a_8}$, which
represent traces that go through the loops an arbitrary (larger than 1) number
of times and which, in traditional abstract interpretation, required an
abstract domain capable of handling constraints on an unknown number of memory
locations.

Instead of directly proving that line 7 can never be reached on such traces,
\approach{} tries to prove that it can \emph{only} be reached if some smaller
trace reaches it as well.
This conditional proof is often easier to synthesize and can avoid needing to
track precise constraints on arbitrarily many memory locations, as in
$\red{a_6}$, $\red{a_7}$, $\red{a_8}$ above.
We do this by associating each such abstract trace with a \emph{scapegoat
trace} that has very closely related behavior to the \emph{primary trace} we
are concerned with.
Usually, the scapegoat trace is the result of running the program on a shrunk
version of the input for one fewer iteration of each loop: if a trace is the
result of running the program on input array $[3, 4, 10, 8]$, the scapegoat
trace might result from running the program on $[4, 10, 8]$.
For example, the equivalent of $\red{a_6}$, $\red{a_7}$, and $\red{a_8}$ are
the following:
\begin{itemize}

    \item $\red{a_6}$:
        About to execute line 3.
        \texttt{arr.n\_data>1}, \texttt{2<=i<=arr.n\_data},
        \texttt{arr.data[0]=0}. \\
        For any reachable trace satisfying those constraints, there exists
        another reachable trace (the \emph{scapegoat}) where the last state is
        identical except: \texttt{arr.data[0]} was removed,
        and both \texttt{arr.n\_data} and \texttt{i} were decremented by 1.

    \item $\red{a_7}$:
        About to execute line 5.
        \texttt{arr.n\_data>1}, \texttt{i=0}, \texttt{arr.data[0]=0}. \\
        For any reachable trace satisfying those constraints, there exists another
        reachable trace where the last state is identical except:
        \texttt{arr.data[0]} was removed and \texttt{arr.n\_data} was
        decremented by 1.

    \item $\red{a_8}$:
        About to execute line 5.
        \texttt{arr.n\_data>1}, \texttt{1<=i<=arr.n\_data},
        \texttt{arr.data[0]=0}. \\
        For any reachable trace satisfying those constraints, there exists another
        reachable trace where the last state is identical except:
        \texttt{arr.data[0]} was removed, and both \texttt{arr.n\_data} and
        \texttt{i} were decremented by 1.

\end{itemize}
Crucially, the constraints for $\red{a_8}$ imply that if some trace satisfying
the constraints of $\red{a_8}$ \emph{were} to fail (reach line 7) on the next
iteration, its corresponding scapegoat trace would \emph{also} fail.
So we have actually proved: if some input of size \texttt{arr.n\_data} leads to
a failing execution, then there is another input of strictly smaller size
\texttt{arr.n\_data - 1} that \emph{also} reaches failure.
In this way, because the size is nonnegative, we can apply \emph{proof by
infinite descent} (\pref{thm:Descent}, essentially induction on input size) to
conclude that no input causes a failure.

\section{\APPROACH}
\label{sec:Theory}
This section formalizes our \approach{} variant of abstract interpretation, as
used in \tool{}.
Traditional abstract interpretation tracks a set of possible traces.
\Approach{} modifies this framework to track a set of \emph{herds} of traces;
each herd is a \emph{primary trace} $\blue{t_1}$ along with a number of
\emph{scapegoat traces} $\blue{t_2}, \blue{t_3}, \ldots$ resulting from
different inputs or different nondeterministic choices\footnote{Apparently,
some call a group of goats a `trip' or a `tribe,' but unfortunately
`t'-starting names overload with `\underline{t}race.'}.
When the abstract interpreter thinks it might be possible for the primary trace
to have crashed, \approach{} allows the abstract interpreter to avoid
giving up by transferring the blame onto one of the scapegoat traces.
To blame a scapegoat trace, it must prove that, if the primary trace has
crashed, then the scapegoat trace has also crashed and is smaller than the
primary trace (for some definition of size; see~\pref{sec:TraceSize}).
In this way, \approach{} proves that every reachable trace either does not
crash, \emph{or}, if it does crash, then there is some strictly smaller
reachable trace that also crashes.
If the size measure is \emph{well-founded} (e.g., natural numbers), proof by
infinite descent~(\pref{thm:Descent}) ensures that no traces crash, i.e., the
program is safe.
A detailed worked example is provided in~\pref{app:Worked}.

\subsection{Trace Sizes and Infinite Descent}
\label{sec:TraceSize}
We assume the tool designer provides a measure of the \emph{size} of a trace.
%
\begin{definition}
    A \emph{trace size function} $\trsize{\cdot}$ maps traces $\blue{t}$ to a
    \emph{size} $\trsize{\blue{t}} \in \mathbb{N}$.
\end{definition}
\tool{} currently uses a measure of size that essentially counts the number of
allocated items on the heap.
But \tool{} is fairly robust to the exact measure of size; we expect that the
number of bytes allocated or even the length of the trace itself would work as
well~(see \pref{app:TraceSize} for further discussion).
The choice of trace size affects only completeness, not soundness.
Since our trace sizes are natural numbers, we can use \emph{proof by infinite
descent}.
\begin{theorem} (Proof by Infinite Descent)
    \label{thm:Descent}
    Let $P(n)$ be any statement parameterized by $n \in \mathbb{N}$.
    Suppose that whenever $P(n)$ is false, there exists $m \in \mathbb{N}$ such
    that $m < n$ and $P(m)$ is also false.\footnote{Inparticular, this implies
    $P(0)$ is not false, as $0$ has no predecessor in $\mathbb{N}$.}
    Then, $P(n)$ is true for all $n \in \mathbb{N}$.
\end{theorem}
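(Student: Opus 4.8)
The plan is to prove the contrapositive by infinite descent, which for natural numbers reduces to an application of well-foundedness (the least-element principle). I would state as the goal: $P(n)$ holds for all $n \in \mathbb{N}$. Suppose for contradiction that this fails, i.e., the set $S = \{ n \in \mathbb{N} \mid P(n) \text{ is false} \}$ is nonempty.

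Since $S$ is a nonempty subset of $\mathbb{N}$, the well-ordering principle guarantees it has a least element; call it $n_0$. By definition $P(n_0)$ is false, so the hypothesis of the theorem applies: there exists $m \in \mathbb{N}$ with $m < n_0$ and $P(m)$ also false. But then $m \in S$ and $m < n_0$, contradicting the minimality of $n_0$. Hence $S$ is empty, and $P(n)$ is true for all $n$.

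The argument is essentially a one-step reduction to a standard principle, so there is no real analytical obstacle; the only thing to be careful about is \emph{which} foundational fact we are permitted to assume. If we take the well-ordering of $\mathbb{N}$ as given, the proof is immediate as above. If instead one wants a fully self-contained derivation, I would instead prove the strengthened statement $Q(n) \colon \text{"}P(k) \text{ holds for all } k \le n\text{"}$ by ordinary strong induction on $n$: the base case $Q(0)$ follows because a false $P(0)$ would require a predecessor $m < 0$ in $\mathbb{N}$, which does not exist (this is exactly the footnote's observation); and the inductive step shows that if $P(n+1)$ were false it would force some false $P(m)$ with $m < n+1$, i.e.\ $m \le n$, contradicting $Q(n)$. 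Either packaging works, and I expect the author to choose the least-element version for brevity.

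In short, the key steps in order are: (1)~negate the conclusion and form the set $S$ of indices where $P$ fails; (2)~extract a least element $n_0 \in S$ via well-ordering; (3)~apply the descent hypothesis to $n_0$ to produce a smaller failing index $m < n_0$; and (4)~derive a contradiction with minimality. The main (minor) obstacle is purely presentational — deciding whether to cite well-ordering directly or to unfold it into strong induction so that the base case $P(0)$ is visibly handled by the absence of a predecessor.
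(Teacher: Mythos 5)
Your proof is correct: the paper states \pref{thm:Descent} as a classical fact and provides no proof of its own, and your well-ordering argument (nonempty failure set, least element, descent hypothesis contradicts minimality) is the standard and complete justification, with the footnote's observation about $P(0)$ correctly recovered in your strong-induction packaging. Nothing is missing.
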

In fact, our results generalize to any measure of size as long as the comparison
relation admits no infinite descending chains, i.e., there exists no infinite
sequence of traces $\trsize{\blue{t_1}} > \trsize{\blue{t_2}} > \ldots$.

\subsection{Trace Herds and Abstract Trace Herds}
Rather than tracking sets of traces, \approach{} tracks sets of \emph{herds} of
traces.
In addition to distinguishing names, we will color herds in \teal{teal} and
abstract herds in \pur{purple}.
\begin{definition}
    A \emph{herd} is an ordered tuple of traces. If $\teal{h}$ is a herd, then
    $\abs{\teal{h}}$ is the size of $\teal{h}$ and $\blue{h[}1\blue{]}, \ldots,
    \blue{h[}\abs{\teal{h}}\blue{]}$ are all traces.
    We use $\teal{h} = (\blue{h[}1\blue{]}, \ldots, \blue{h[}n\blue{]})$ to
    indicate a herd of size $\abs{\teal{h}} = n$.
    We call $\blue{h[}1\blue{]}$ the \emph{primary trace} of the herd, and
    $\blue{h[}2\blue{]}, \ldots, \blue{h[}n\blue{]}$ the \emph{scapegoat
    traces} of the herd.
\end{definition}
The size $\abs{\teal{h}}$ of a herd is just the number of traces in it; there
is no relation to the size of an individual trace in the herd.
The first trace in the herd plays the role of the primary trace we are trying
to prove things about; in fact, \approach{} restricted to size-1 herds is
identical to traditional abstract interpretation~(\pref{alg:AbsInt}).
$\hSingle$ constructs a singleton herd from a trace:
\begin{definition}
    Given a trace $\blue{t}$, we define $\hSingle(\blue{t}) =
    (\blue{t})$ to be the herd of size 1 consisting of only the primary trace
    $\blue{t}$ and no scapegoats.
    Given a set of traces $\blue{T}$, we overload $\hSingle(\blue{T})$ to mean the set
    $\{ \hSingle(\blue{t}) \mid \blue{t} \in \blue{T} \}$.
\end{definition}
Given a herd, we need notation for modifying individual traces in the herd.
\begin{definition}
    Given a herd $\teal{h}$, index $i$, and trace $\blue{t}$, we define the
    \emph{update-index} function
    $
        \hUpdate(\teal{h}, i, \blue{t})
        = (\blue{h[}1\blue{]}, \ldots, \blue{h[}i-1\blue{]}, \blue{t}, \blue{h[}i+1\blue{]}, \ldots, \blue{h[}n\blue{]}).
    $
    We also define the \emph{drop-index} function
    $
        \hDrop(\teal{h}, i) = (\blue{h[}1\blue{]}, \ldots, \blue{h[}i-1\blue{]}, \blue{h[}i+1\blue{]}, \ldots, \blue{h[}n\blue{]}).
    $
\end{definition}
Finally, we need to be able to extend traces in the herd by executing the
program on that trace for one timestep.
\begin{definition}
    Given a herd $\teal{h}$, we define the function
    $
        \hStep(\teal{h}, i)
        = \{ \hUpdate(\teal{h}, i, \blue{t}) \mid \blue{t} \in \cStep(\blue{h[}i\blue{]}) \},
    $
    or $\hStep(\teal{h}, i) = \teal{h}$ if $i$ is not between $1$ and $\abs{\teal{h}}$.
\end{definition}

\subsection{Herd Abstractions}
\label{sec:HerdAbstraction}
As with~\pref{alg:AbsInt}, the tool designer must provide an abstract domain to
represent sets of herds along with a number of abstract domain operations.
\begin{definition}
    An \emph{abstract herd domain} $\purcal{A^H}$ is a set of \emph{abstract
    herds} along with a \emph{concretization function} $\hgamma$ that maps
    abstract herds to sets of herds.
\end{definition}
Once again, we make no assumptions about what the abstract herds are, and the
concretization function is only needed for the theoretical results; it does not
need to be implementable.
The tool designer does still need to provide a number of operations for working
with abstract herds.

The following operations are essentially lifting ones we used
in~\pref{sec:TraceAbstraction} to abstract herds rather than abstract traces:
\begin{enumerate}
    \item $\pur{I^{H\sharp}}$: represents all herds where the primary
        trace is a starting trace, i.e., $\hSingle(\blue{I}) \subseteq
        \hgamma(\pur{I^{H\sharp}})$.
    \item $\hCanFail(\pur{a})$: must be true if any of the herds has a failing
        primary trace, i.e., true whenever there exists $\hh \in
        \hgamma(\pur{a})$ with $\hidx{h}{1} \in \blue{F}$.
    \item $\hStepSharp(\pur{a})$: runs the primary trace in each herd forward,
        i.e., for every $\hh \in \hgamma(\pur{a})$ and $\teal{h'} \in
        \hStep(\hh, 1)$, we have $\teal{h'} \in
        \hgamma(\hStepSharp(\pur{a}))$.
    \item $\hMorePrecise(\pur{a}, \pur{b})$: true only when
        $\hgamma(\pur{a}) \subseteq \hgamma(\pur{b})$.
    \item $\hWiden(\pur{a})$: returns anything as long as
        $\hMorePrecise(\pur{a}, \hWiden(\pur{a}))$.
    \item $\hSplit(\pur{a})$: returns abstract herds
        $\pur{a'_1}, \pur{a'_2}, \ldots, \pur{a'_n}$ such that
        $\hgamma(\pur{a}) \subseteq \bigcup_i \hgamma(\pur{a'_i})$.
\end{enumerate}
The following are new operations only used in \approach{}:
\begin{enumerate}
    \item $\hMaybeAddScapegoats(\pur{a})$ adds candidate scapegoat
        traces to the herd.
        These candidate scapegoat traces must be reachable traces whenever the
        primary trace is reachable.
        In practice, the scapegoat traces are constructed by modifications to
        the input in the main trace, e.g., dropping the first node in a linked
        list input argument.
        Formally, for every herd $\hh \in
        \hgamma(\pur{a})$, either
        \begin{enumerate}
            \item $\blue{h[}1\blue{]}$ is not a reachable trace, or
            \item there exists $\teal{h'} \in
                \hgamma(\hMaybeAddScapegoats(\pur{a}))$ extending $\teal{h}$, i.e., where $\teal{h'} =
                (\blue{h[}1\blue{]}, \ldots, \blue{h[}\teal{\abs{h}}\blue{]},
                \blue{t_1}, \ldots, \blue{t_n})$ and $\blue{t_1}, \ldots,
                \blue{t_n}$ are all reachable traces.
        \end{enumerate}

    \item $\hStepE(\pur{a}, i)$ runs the $i$th trace in the herd (must be a
        scapegoat, i.e., $i > 1$) forward for one timestep.
        If multiple subsequent states are possible, e.g., because a
        nondeterministic choice was made by the program, it may pick any one of
        the choices (we only need to guarantee the existence of at least one
        scapegoat trace, not analyze all possible scapegoat traces).
        Alternatively, it may drop a scapegoat trace, e.g., if no successors
        exist.
        Formally, for any~$\hh \in \hgamma(\pur{a})$, either:
        \begin{enumerate}
            \item there exists a $\teal{h'} \in \hStep(\hh, i)$ such that
                $\teal{h'} \in \hgamma(\hStepE(\pur{a}, i))$, or
            \item $\hDrop(\hh, i) \in \hgamma(\hStepE(\pur{a}, i))$.
        \end{enumerate}

    \item $\hCanBlame(\pur{a}, i)$ determines whether we can blame the
        $i$th scapegoat in the herd, i.e., whether it represents a trace that
        reached failure on a strictly smaller input.  Formally, returns true
        only if for every $\hh \in \hgamma(\pur{a})$ both:
        \begin{enumerate}
            \item $\trsize{\blue{h[}i\blue{]}} < \trsize{\blue{h[}1\blue{]}}$, and
            \item $\blue{h[}i\blue{]} \in \blue{F}$.
        \end{enumerate}
\end{enumerate}

\subsection{Algorithm}
\begin{algorithm}[t]
    \caption{\Approach{}.
    For soundness, \texttt{StepperHeuristic} may return any sequence of numbers
    greater than 1.
    }
    \label{alg:OurAlgo}
    \KwData{A program~(\pref{sec:Program}) and an abstract herd
    domain~(\pref{sec:HerdAbstraction}).}
    \KwResult{\textsc{Safe} if the program is definitely safe, or \textsc{Unknown}.}
    $\mathtt{worklist} \gets \{\pur{I^{H\sharp}}\}, \quad \mathtt{seen} \gets \{ \}$\;
    \While{$\mathtt{worklist}$ is not empty}{
        $\pur{a} \gets \mathtt{worklist.pop()}$\;\label{li:RIIIterate}

        \If{$\hCanFail(\pur{a})$ and there is no $i>1$ with $\hCanBlame(\pur{a}, i)$}{\label{li:OurCheck}
            \Return{$\mathtt{Unknown}$}\;
        }
        $\mathtt{seen.add}(\pur{a})$\;\label{li:OurSeen}
        \For{$\pur{a'_i} \in \hSplit(\hStepSharp(\pur{a}))$}{\label{li:RIIInnerIter}
            $\pur{a'_i} \gets \pur{\mathtt{MaybeAddScapegoats}^\sharp}(\pur{a'_i})$\;
            \lFor{$j \gets \mathtt{StepperHeuristic}()$ with $j > 1$}{
                $\pur{a'_i} \gets \hStepE(\pur{a'_i}, j)$
            }
            $\pur{a'_i} \gets \hWiden(\pur{a'_i})$\;
            \If{there exists $\pur{b} \in \mathtt{seen} \cup \mathtt{worklist}$ with $\hMorePrecise(\pur{a'_i}, \pur{b})$}{
                \textbf{continue}\;\label{li:OurSkip}
            }
            $\mathtt{worklist} \gets (\mathtt{worklist} \setminus
                \{ \pur{b} \in \mathtt{worklist}
                \mid \hMorePrecise(\pur{b}, \pur{a'_i}) \}) \cup \{ \pur{a'_i} \}$\;\label{li:OurPush}
        }
    }
    \Return{$\mathtt{Safe}$}\;
\end{algorithm}

The \approach{} algorithm is presented in~\pref{alg:OurAlgo}.
It is almost identical to~\pref{alg:AbsInt}, except (1) the verifier can add
and step forward scapegoat traces arbitrarily, i.e., according to the
heuristics $\hMaybeAddScapegoats$ and \texttt{StepperHeuristic}; and (2) the
verifier can ignore a possibly failing herd if one of the scapegoat traces can
be successfully blamed.

\subsubsection{Correctness Proof}
In traditional abstract interpretation,~\pref{lem:AbsInt} guaranteed that every
reachable trace was in the concretization set of some seen abstract trace.
But in \approachname{}, the abstract elements represent sets of herds, not
sets of traces, so we need to be more precise about what we mean when we say an
abstract herd seen by the algorithm accounts for a given trace.
We will say that an abstract herd $\pur{a}$ accounts for a trace
$\blue{t}$ if there is some herd in the concretization set of $\pur{a}$ where
(i) $\blue{t}$ is the primary, and (ii) everything in the herd is reachable.
The following definition makes this precise.
\begin{definition}
    \label{def:AccountsFor}
    An abstract herd $\pur{a}$ \emph{accounts for} a trace $\blue{t}$ if there
    exists some herd $\teal{h} \in \hgamma(\pur{a})$ with $\teal{h[}1\teal{]} =
    \blue{t}$ and all of the traces in $\teal{h}$ are reachable.
    In that case, we say $\teal{h}$ \emph{witnesses} that $\pur{a}$ accounts
    for $\blue{t}$.
\end{definition}
\pref{lem:Scapegoats} observes that none of the new, scapegoat-only operations
that~\pref{alg:AbsInt} performs can decrease the set of traces accounted for.
\begin{lemma}
    \label{lem:Scapegoats}
    Let $\pur{a}$ be an abstract herd and suppose $\blue{t}$ is a trace
    accounted for by $\pur{a}$.
    Then, after computing
        $\pur{a'} = \hWiden(\pur{a})$,
        $\pur{a'} = \hMaybeAddScapegoats(\pur{a})$, or
        $\pur{a'} = \hStepE(\pur{a}, i)$ for any $i > 1$,
    $\blue{t}$ is still accounted for by $\pur{a'}$.
\end{lemma}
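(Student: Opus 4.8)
The plan is to prove the statement by a direct case analysis on which of the three operations produced $\pur{a'}$, in each case exhibiting a concrete witness (in the sense of \pref{def:AccountsFor}) that $\pur{a'}$ accounts for $\blue{t}$. By hypothesis there is a witnessing herd $\teal{h} \in \hgamma(\pur{a})$ with $\hidx{h}{1} = \blue{t}$ and every trace in $\teal{h}$ reachable; I would keep $\teal{h}$ fixed throughout and, for each operation, transform it into a witness $\teal{h'} \in \hgamma(\pur{a'})$ whose primary is still $\blue{t}$ and all of whose traces remain reachable. Two structural facts will be used repeatedly: dropping or editing any index $i > 1$ leaves the primary $\hidx{h}{1}$ untouched, and reachability is preserved under $\cStep$, since appending one valid transition to an already-reachable trace keeps the initial singleton prefix in $\blue{I}$ and all steps valid.

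The $\hWiden$ case is immediate: its specification guarantees $\hMorePrecise(\pur{a}, \pur{a'})$, hence $\hgamma(\pur{a}) \subseteq \hgamma(\pur{a'})$, so the very same $\teal{h}$ already witnesses that $\pur{a'}$ accounts for $\blue{t}$. For $\hMaybeAddScapegoats$, I would invoke its specification on the herd $\teal{h}$. Because every trace of the witness is reachable, in particular $\hidx{h}{1}$ is reachable, so alternative (a) of the specification (primary not reachable) is ruled out and we must be in alternative (b): there is some $\teal{h'} \in \hgamma(\pur{a'})$ extending $\teal{h}$ by appending fresh traces $\blue{t_1}, \ldots, \blue{t_n}$ that are all reachable. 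Since extension does not alter the first component, $\hidx{h'}{1} = \hidx{h}{1} = \blue{t}$, and the traces of $\teal{h'}$ are exactly the reachable traces of $\teal{h}$ together with the reachable new ones; thus $\teal{h'}$ witnesses the claim.

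The $\hStepE(\pur{a}, i)$ case with $i > 1$ is the one requiring the most care, and is where I expect the only real bookkeeping. Applying its specification to $\teal{h}$ gives two alternatives. In the first, some $\teal{h'} \in \hStep(\teal{h}, i)$ lies in $\hgamma(\pur{a'})$; unfolding $\hStep$, either $i$ is out of range (i.e.\ $i > \abs{\teal{h}}$) and $\teal{h'} = \teal{h}$, already a witness, or $\teal{h'} = \hUpdate(\teal{h}, i, \blue{t'})$ for some $\blue{t'} \in \cStep(\hidx{h}{i})$. In the latter subcase, because $i > 1$ the update touches only a scapegoat slot, so the primary is preserved, $\hidx{h'}{1} = \blue{t}$; the edited trace $\blue{t'}$ is reachable since it extends the reachable $\hidx{h}{i}$ by one valid program step, and every other trace is unchanged and hence still reachable. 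In the second alternative, $\hDrop(\teal{h}, i) \in \hgamma(\pur{a'})$; again $i > 1$ means the dropped trace is a scapegoat, so the primary is preserved and the surviving traces form a subset of the reachable traces of $\teal{h}$. In every alternative we obtain a herd in $\hgamma(\pur{a'})$ with primary $\blue{t}$ and all traces reachable, completing the case analysis and hence the proof.
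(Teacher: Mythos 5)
Your proposal is correct and follows essentially the same three-way case analysis as the paper's own proof: reuse the witness for $\hWiden$ via $\hMorePrecise$, invoke case (b) of the $\hMaybeAddScapegoats$ specification, and check both alternatives of the $\hStepE$ specification. Your treatment of the $\hStepE$ case is somewhat more detailed than the paper's (which simply asserts that either alternative "satisfies the desired conditions"), but the substance is identical.
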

\begin{proof}
    Let $\teal{h} \in \hgamma(\pur{a})$ be the herd witnessing that $\pur{a}$
    accounts for $\blue{t}$~(\pref{def:AccountsFor}).
    We must show there exists some $\teal{h'}$ witnessing that $\pur{a'}$
    accounts for $\blue{t}$ as well.
    For $\pur{a'} = \hWiden(\pur{a})$, the definition guarantees that
    $\hgamma(\pur{a}) \subseteq \hgamma(\pur{a'})$ so in particular we can take
    $\teal{h'}$ to be $\teal{h}$.
    For $\pur{a'} = \hMaybeAddScapegoats(\pur{a})$, we know that a valid
    $\teal{h'}$ exists by the requirements on case (b) of the definition of
    $\hMaybeAddScapegoats$ in~\pref{sec:HerdAbstraction}.
    For $\pur{a'} = \hStepE(\pur{a}, i)$ with $i > 1$, we know that there
    exists some $\teal{h'} \in \hgamma(\pur{a'})$ such that either (i) a
    $\teal{h'} \in \hStep(\teal{h}, i)$, or (ii) $\teal{h'} \in \hDrop(\teal{h},
    i)$.
    In either case, $\teal{h'}$ satisfies the desired conditions.
    \qed
\end{proof}
We can now prove the equivalent of~\pref{lem:AbsInt} in almost exactly the same
way as~\pref{sec:AbsInt}, except replacing ``$\blue{t} \in \trgamma(\red{a})$'' with
``$\pur{a}$ accounts for $\blue{t}$.''
\begin{lemma}
    \label{lem:OurAlgo}
    If~\pref{alg:OurAlgo} returns \texttt{Safe}, then for any reachable trace
    $\blue{t}$ there exists an abstract herd $\pur{a} \in \mathtt{seen}$
    accounting for $\blue{t}$.
\end{lemma}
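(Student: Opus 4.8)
The plan is to mirror the inductive proof of \pref{lem:AbsInt}, inducting on the length of $\blue{t} = (\blue{s_1}, \ldots, \blue{s_n})$, but carrying the witnessing herd of \pref{def:AccountsFor} through the extra scapegoat machinery. For the base case $n = 1$, the trace $\blue{t} = (\blue{s_1})$ with $\blue{s_1} \in \blue{I}$ is reachable, so $\hSingle(\blue{t}) \in \hgamma(\pur{I^{H\sharp}})$ witnesses that $\pur{I^{H\sharp}}$ accounts for $\blue{t}$; since $\pur{I^{H\sharp}}$ is popped and added to $\mathtt{seen}$ on the first iteration (the algorithm returns \texttt{Safe}, so the check on \pref{li:OurCheck} never fires), we are done.

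For the inductive step ($n > 1$), the prefix $\blue{t'} = (\blue{s_1}, \ldots, \blue{s_{n-1}})$ is itself reachable, so by the inductive hypothesis some $\pur{a} \in \mathtt{seen}$ accounts for $\blue{t'}$, say via a witness $\teal{h} \in \hgamma(\pur{a})$ whose primary is $\blue{t'}$ and all of whose traces are reachable. I would track this witness through the single iteration that added $\pur{a}$ to $\mathtt{seen}$. First, since $\blue{s_{n-1}} \to \blue{s_n}$ gives $\blue{t} \in \cStep(\blue{t'})$, the herd $\hUpdate(\teal{h}, 1, \blue{t}) \in \hStep(\teal{h}, 1)$ lies in $\hgamma(\hStepSharp(\pur{a}))$ by the definition of $\hStepSharp$; its primary is the reachable trace $\blue{t}$ and its scapegoats are unchanged (hence still reachable), so $\hStepSharp(\pur{a})$ accounts for $\blue{t}$. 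Next, the $\hSplit$ guarantee places this witness in $\hgamma(\pur{a'_i})$ for some branch $\pur{a'_i}$, which therefore also accounts for $\blue{t}$. Applying $\hMaybeAddScapegoats$, then each $\hStepE(\cdot, j)$ from the \texttt{StepperHeuristic} loop (all with $j > 1$), then $\hWiden$, preserves accounting-for by \pref{lem:Scapegoats}, so the final $\pur{a'_i}$ still accounts for $\blue{t}$.

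It remains to argue that some abstract herd accounting for $\blue{t}$ actually lands in $\mathtt{seen}$, which is identical to the corresponding reasoning in \pref{lem:AbsInt} with ``$\hMorePrecise$'' and ``accounts for'' in place of ``$\trMorePrecise$'' and ``$\in \trgamma$''. If $\pur{a'_i}$ is pushed on \pref{li:OurPush}, it is eventually popped and added to $\mathtt{seen}$. If instead it is skipped on \pref{li:OurSkip} because some $\pur{b}$ with $\hMorePrecise(\pur{a'_i}, \pur{b})$ already exists, then $\hgamma(\pur{a'_i}) \subseteq \hgamma(\pur{b})$ carries the witness into $\hgamma(\pur{b})$, so $\pur{b}$ accounts for $\blue{t}$; and if $\pur{a'_i}$ is later displaced from the worklist on \pref{li:OurPush}, its replacement is less precise and hence also accounts for $\blue{t}$. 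Since the algorithm terminated with an empty worklist, this chain of ``accounts for'' witnesses must terminate at an element of $\mathtt{seen}$.

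The step I expect to be the main obstacle is verifying that $\hStepSharp$ preserves the full accounting-for property --- in particular, confirming that advancing only the primary trace (index $1$) leaves every scapegoat trace reachable, so that the stepped herd still satisfies both clauses of \pref{def:AccountsFor}. This is exactly the point where the herd structure, rather than a single trace, must be handled with care; \pref{lem:Scapegoats} deliberately does not cover $\hStepSharp$ or $\hSplit$, so these two operations are where the genuine work of the induction lives.
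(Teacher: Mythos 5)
Your proof is correct and follows essentially the same route as the paper's: induction on trace length, tracking the witnessing herd through the iteration that processed the prefix, invoking \pref{lem:Scapegoats} for the scapegoat-only operations, and closing with the same worklist/\texttt{seen} displacement argument. The only difference is that you explicitly verify the $\hStepSharp$ and $\hSplit$ steps preserve accounting-for (which the paper compresses into ``one of the $\pur{a'_i}$s must account for $\blue{t}$''), and your verification there is sound.
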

\begin{proof}
    Induct on the length of $\blue{t} = (\blue{s_1}, \blue{s_2}, \ldots,
    \blue{s_n})$.
    For the base case, if $n = 1$ then it is accounted for by $\pur{a}$
    when~\pref{li:OurSeen} is reached on the first iteration.
    Otherwise, by inductive hypothesis, the prefix $\blue{t'} = (\blue{s_1},
    \ldots, \blue{s_{n-1}})$ was accounted for by some~$\pur{a}$ added to
    $\mathtt{seen}$ on~\pref{li:OurSeen} during some iteration.
    On that iteration, one of the~$\pur{a'_i}$s must account for $\blue{t}$ and
    get added to the worklist, and hence processed and added to $\mathtt{seen}$
    in a future iteration (\pref{lem:Scapegoats} guarantees that it still
    accounts for $\blue{t}$ even after executing $\hMaybeAddScapegoats$,
    $\hStepE$, and $\hWiden$ in the inner loop).
    Alternatively, a less-precise $\pur{b}$ might have been found
    already~(\pref{li:OurSkip}), but then $\blue{t}$ will be accounted for by
    $\pur{b}$ already, as desired.
    Notably, it is possible for $\pur{a'_i}$ to be removed from the worklist in
    a future execution of~\pref{li:OurPush} but that only occurs if something
    less precise (hence also accounting for $\blue{t}$) is added to replace it.
    %
    %
    \qed
\end{proof}

\begin{lemma}
    \label{lem:Descent}
    If~\pref{alg:OurAlgo} reports \texttt{Safe}, then for every reachable trace
    $\blue{t}$ either~$\blue{t} \not\in \blue{F}$ or there is another reachable
    trace~$\blue{t'} \in \blue{F}$ with $\trsize{\blue{t'}} <
    \trsize{\blue{t}}$.
\end{lemma}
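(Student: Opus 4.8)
The plan is to combine~\pref{lem:OurAlgo} with the observation that every abstract herd added to $\mathtt{seen}$ must have passed the failure check on~\pref{li:OurCheck}. First I would fix an arbitrary reachable trace $\blue{t}$ and apply~\pref{lem:OurAlgo} to obtain an abstract herd $\pur{a} \in \mathtt{seen}$ that accounts for $\blue{t}$. By~\pref{def:AccountsFor}, this yields a witnessing herd $\teal{h} \in \hgamma(\pur{a})$ with $\hidx{h}{1} = \blue{t}$ in which \emph{every} trace---primary and scapegoat alike---is reachable. This reachability of the scapegoats is the crucial payload of the accounting relation, and is what will let me extract a genuine reachable crashing trace rather than merely an abstract one.

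Next I would exploit the fact that $\pur{a}$ reached~\pref{li:OurSeen}, so the guard on~\pref{li:OurCheck} was false when $\pur{a}$ was processed. Negating that guard yields exactly two cases: either $\hCanFail(\pur{a})$ is false, or there is some $i > 1$ with $\hCanBlame(\pur{a}, i)$ true. I would handle these in turn. In the first case, the defining contract of $\hCanFail$ (it must report true whenever some herd in $\hgamma(\pur{a})$ has a failing primary) contrapositively forces the primary $\hidx{h}{1} = \blue{t}$ to satisfy $\blue{t} \notin \blue{F}$, establishing the first disjunct of the lemma directly.

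In the second case, I would instantiate the contract of $\hCanBlame(\pur{a}, i)$ at the witnessing herd $\teal{h}$. By definition this yields both $\trsize{\hidx{h}{i}} < \trsize{\hidx{h}{1}} = \trsize{\blue{t}}$ and $\hidx{h}{i} \in \blue{F}$. Since $\teal{h}$ witnesses accounting, the scapegoat $\hidx{h}{i}$ is itself a reachable trace, so taking $\blue{t'} = \hidx{h}{i}$ supplies a reachable crashing trace strictly smaller than $\blue{t}$, establishing the second disjunct.

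I do not anticipate a serious obstacle: the argument is essentially a case split on the negated loop guard, with each branch discharged by the definitional guarantee of the corresponding abstract operation. The one point requiring care---and the reason the lemma is phrased via ``accounts for'' rather than plain concretization---is that $\hCanBlame$ only certifies properties of the abstract herd; to conclude that an \emph{actual} reachable trace crashes, I must lean on the reachability clause of~\pref{def:AccountsFor} to transfer $\hidx{h}{i} \in \blue{F}$ from the abstract witness to a bona fide reachable trace $\blue{t'}$.
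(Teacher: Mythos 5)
Your proposal is correct and follows essentially the same route as the paper's own proof: invoke \pref{lem:OurAlgo} to obtain a witnessing herd with reachable scapegoats, then case-split on the negated guard of \pref{li:OurCheck}, discharging one branch via the contract of $\hCanFail$ and the other via the contract of $\hCanBlame$ together with the reachability clause of \pref{def:AccountsFor}. The only difference is that you spell out the two cases and the role of scapegoat reachability more explicitly than the paper does, which is a faithful elaboration rather than a different argument.
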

\begin{proof}
    From~\pref{lem:OurAlgo} an abstract herd $\pur{a}$ was added to
    \texttt{seen} with some herd~$\hh \in \hgamma(\pur{a})$ having primary
    trace~$\blue{t}~=~\blue{h[}1\blue{]}$ and reachable scapegoats
    $\blue{h[}2\blue{]}$, \ldots, $\blue{h[}n\blue{]}$. But
    for~\pref{alg:OurAlgo} to return \texttt{Safe}, it must have passed the
    $\hCanFail$ and~$\hCanBlame$ check on~\pref{li:OurCheck}, i.e., either
    $\blue{t} \not\in \blue{F}$ or some scapegoat $\blue{t'} =
    \blue{h[}i\blue{]}$ is smaller and also fails, i.e., $\blue{t'} \in
    \blue{F}$ and~$\trsize{\blue{t'}} < \trsize{\blue{t}}$ as desired.
    \qed
\end{proof}

\begin{theorem}
    \label{thm:OurAlgo}
    If~\pref{alg:OurAlgo} reports \texttt{Safe}, then the program is safe.
\end{theorem}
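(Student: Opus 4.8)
The plan is to combine \pref{lem:Descent} with the infinite descent principle from \pref{thm:Descent}, instantiating the latter with the statement $P$ being parameterized by trace size. The key idea is that \pref{lem:Descent} already gives us exactly the descent hypothesis we need: every failing reachable trace has a strictly smaller failing reachable trace. We just need to phrase this as a statement about natural numbers so that \pref{thm:Descent} applies, then conclude that $\blue{R} \cap \blue{F} = \emptyset$ as required by \pref{def:Safety}.

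Concretely, I would define the predicate $P(n)$ to be the statement ``no reachable trace $\blue{t}$ with $\trsize{\blue{t}} = n$ is in $\blue{F}$'' (equivalently, ``no reachable trace of size $n$ crashes''). To apply \pref{thm:Descent}, I must verify its hypothesis: whenever $P(n)$ is false, there exists $m < n$ with $P(m)$ also false. So suppose $P(n)$ is false, meaning there is some reachable trace $\blue{t} \in \blue{F}$ with $\trsize{\blue{t}} = n$. Since \pref{alg:OurAlgo} reported \texttt{Safe}, \pref{lem:Descent} applies: as $\blue{t}$ is reachable and in $\blue{F}$, the first disjunct fails, so the second must hold, giving a reachable trace $\blue{t'} \in \blue{F}$ with $\trsize{\blue{t'}} < \trsize{\blue{t}} = n$. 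Setting $m = \trsize{\blue{t'}}$, we have $m < n$ and $P(m)$ is false (witnessed by $\blue{t'}$). This is exactly the descent hypothesis.

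Having verified the hypothesis, \pref{thm:Descent} concludes that $P(n)$ holds for all $n \in \mathbb{N}$. Since every reachable trace has some natural-number size, this means no reachable trace is in $\blue{F}$ at all, i.e., $\blue{R} \cap \blue{F} = \emptyset$, which is precisely the definition of the program being safe (\pref{def:Safety}).

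I do not anticipate a serious obstacle here, as the heavy lifting has already been done in \pref{lem:Descent}. The only point requiring care is the bookkeeping of translating between ``trace $\blue{t}$ crashes'' and the size-indexed predicate $P(n)$: one must make sure that the trace $\blue{t'}$ produced by \pref{lem:Descent} really does certify that $P(m)$ is false for the specific value $m = \trsize{\blue{t'}}$, rather than merely some unspecified smaller value. Because \pref{lem:Descent} hands us the witnessing trace explicitly and trace sizes are natural numbers (\pref{sec:TraceSize}), this matching is immediate and the infinite descent structure closes cleanly.
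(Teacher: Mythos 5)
Your proposal is correct and follows exactly the paper's own argument: instantiate \pref{thm:Descent} with $P(n)$ being ``no reachable trace of size $n$ is in $\blue{F}$'' and use \pref{lem:Descent} to discharge the descent hypothesis. You simply spell out the bookkeeping that the paper's one-sentence proof leaves implicit.
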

\begin{proof}
    Using~\pref{lem:Descent} we can apply proof by infinite
    descent~(\pref{thm:Descent}) to the claim ``no reachable trace of size $n$
    is in $\blue{F}$'' and conclude that no reachable trace (of any size) is in
    $\blue{F}$, i.e., the program is safe.
\end{proof}

\section{The \texttt{Shrinker} Tool}
\label{sec:Implementation}
This section describes our \approach{} implementation \tool{}.
The \tool{} homepage is located at \url{https://lair.masot.net/shrinker/} and
an archival version with benchmarks and baseline tools is located at
\url{https://doi.org/10.5281/zenodo.15225947}.

\subsection{User Interface}
Verification goals are provided by the user to \toolname{} as a C file defining
a special \texttt{test} function.
This function may take parameters, and it may call the special methods
\Vtt{ignore()} and \Vtt{fail()}.
\toolname{} tries to prove that no input to \texttt{test} produces a trace that
calls \Vtt{fail()} without first calling \Vtt{ignore()}.
It is useful to wrap those methods in \texttt{assume(X)} and \texttt{assert(X)}
macros that check a condition before ignoring or failing.
This lets the user express program-specific correctness properties without
requiring the user to learn a complicated logical notation.
\tool{} automatically instruments pointer operations to check memory safety
properties.
Optional overflow checking can also be implemented by instrumentation.
We also support \Vtt{nondet\_type()} methods to get nondeterministic values.
We do not support VLAs or explicit C array types, but the user can specify that
an input pointer points to an arbitrarily sized array of
items~(\pref{app:ArrayTypes}).

\paragraph{Subset of C Supported}
\toolname{} supports a usable subset of C including structs, pointers, loops,
nonrecursive and tail-recursive function calls, and the standard integer types.
We throw an error immediately upon seeing unsupported parts of C, such as union
types, function pointers, and array types.
%

\paragraph{Assumption that Inputs Point to Disjoint Heaps}
For linked structures, \toolname{} verifies the correctness condition under the
additional assumption that the inputs to the function point to disjoint,
acyclic heaps, i.e., we only consider tree-shaped input structures.
This only affects \emph{inputs} to the function; the function can itself modify
the input into any form it wishes and call other functions with cyclic inputs.
This is how, e.g., we verify doubly and cyclicly linked structures: the test
harness first rewrites the acyclic input into a cyclic list and only then is
the relevant operation performed.
See~\pref{app:DisjointHeapsInterior} for more details.

\paragraph{Array and String Inputs}
Array inputs are specified by a struct type having two fields, one integer
length field named \texttt{n\_X} and one pointer field named \texttt{X}
(see~\pref{app:ArrayTypes} for an example).
\toolname{} verifies the program under the assumption that all instances of
such structs reachable from the input arguments are initialized with a
nonnegative value for the \texttt{n\_X} field and an allocated memory region of
exactly \texttt{n\_X} items pointed to be the \texttt{X} field.
String inputs can be specified by declaring an array-of-chars input, then
having the test harness iterate over it at the start of the test harness and
call \Vtt{ignore()} if it is not a properly formatted string.

\subsection{Tool Organization and Trusted Code Base}
\tool{} is unusually small, having fewer than 7 thousand lines total of C and
Python code, with no runtime dependence on third-party libraries.
A small trusted codebase can improve maintainability and confidence in its
soundness.

\tool{} includes a parser written in Python that lowers C code to a simple
intermediate representation.
Each operation in this intermediate representation has corresponding
implementations of abstract transformers (i.e., $\hStepSharp$ and~$\hStepE$)
that together encode the semantics of the program.
To keep the implementation manageable, we do not support array types, unions,
or function pointers.
We also inline all function calls, hence we only support nonrecursive and
tail-recursive function calls.
%
%
When unsupported syntax is encountered, we provide a line number and
descriptive error message to the user.

The remainder of the tool is organized as described in~\pref{sec:Theory}, using
an abstract domain we wrote with core operations implemented in C for
efficiency.
One other major optimization was to parallelize the tool (see~\pref{app:Parallel}).

\subsection{Abstract Herd Domain}
We represent abstract herds as constraints on the values of memory locations in
different states in each trace.
Constraints can relate valuess across different states, memory locations, and
traces in the herd.
They can also constrain the possible values of trace metadata, e.g., what the
`program counter' (next instruction to be executed) is.
Examples of constraints include:
\begin{itemize}
    \item ``The value of $i$ in the first state of trace 1 is one less than the
        value of $i$ in the first state of trace 2,''
    \item ``The value of $j$ in the last state of trace 5 is positive,''
    \item ``If $x$ is positive in the second-to-last state of trace 1, then the
        program counter in the last state of trace 1 is instruction 10 in the
        intermediate representation of the program; otherwise it is instruction
        20.''
\end{itemize}
The abstraction can be queried, e.g., to ask questions like:
\begin{itemize}
    \item ``Can $j$ be nonzero in the last state of trace 5?''
    \item ``What are the possible program counter values in the last
        state of trace 1?''
\end{itemize}

\subsubsection{Memory Abstraction}
The above informal examples refer to local variables in the program.
But to verify heap-manipulating code, we need the ability to refer to locations
in the heap.
This is done using heap addresses, i.e., constraints can refer to a term
representing ``the value at memory address X in the $i$th state of trace $j$.''
We use a memory abstraction inspired by the three-valued logic
analyzer~\cite{tvla}.
We track facts about two types of locations in memory: either \emph{concrete}
locations that represent a specific address in memory (e.g., the first node in
a linked list), or a \emph{summary} location that represents multiple addresses
in memory (e.g., all of the remaining nodes in the list).
We implement this with a two-level memory abstraction: every memory location
has both a \emph{major} and \emph{minor} address, and summary locations refer
to a group of memory locations that share the same major address.
We implement linked structures of arbitrary size by adding a summary node to
represent all nodes beyond a certain depth.
We prevent this addressing scheme from leaking into the program, e.g., by
disallowing the casting of non-NULL pointers into integers.
We implement arrays by giving all entries in the array a single major address,
introducing concrete nodes for the first few entries in the array, and then
introducing a summary node to represent the remainder of the array (the number
of entries to make concrete nodes for is a user-configurable parameter).

\subsubsection{Numerical Reasoning}
\label{sec:NumericalAbstraction}
\tool{} only adds a small number of constraints, e.g., applying $\hStep$ to a
program about to execute a line \texttt{i=j;} will add a constraint saying that
\texttt{i} in the last state has the same value as \texttt{j} in the
second-to-last state (along with other constraints asserting that no other
memory location has changed).
These often imply additional implicit constraints, e.g., if we also know
that \texttt{j>k} in the second-to-last state, we can infer \texttt{i>k} in the
last state (after applying \texttt{i=j}).
To make such inferences, we wrote a standard integer difference logic (IDL)
solver to determine all relations implied by constraints of the form $x - y
\leq c$ where $x$ and $y$ are terms and $c$ is a constant
offset~\cite{clrs_idl}.
All terms in the state (even nonnumerical ones) are represented in the IDL
solver; boolean terms can be encoded as 0 for false and 1 for true.
Additional rules infer basic numerical and logical properties, e.g., when $a =
b$ and some fact $F(a)$ is true, the fact $F(b)$ can be deduced.
Additional checks are added to properly model unsigned \texttt{int} overflow
and casting behavior according to the C standard, even though the underlying
solver treats all variables as mathematical integers.

\subsection{Widening ($\hWiden$)}
$\hWiden(\pur{a})$ is implemented by dropping constraints in $\pur{a}$
heuristically.
\tool{} only widens at loop iteration points, and only once the loop has been
unwound for a certain (user-controlled) number of times or a summary region has
been accessed during an earlier iteration of that loop.
%
%
We first remove all constraints referring to anything other than the very first
and last states in the trace.
We then search through the \texttt{seen} and \texttt{worklist} lists for other
abstract trace herds with the same abstract path (essentially, about to execute
the same line; see~\pref{app:AbstractPaths} for more details), and weaken any
constraints that are not shared by all of those herds to just store the sign of
the difference (e.g., if one implies $a - b < -4$ and another implies $a - b <
-7$, we weaken to the constraint $a - b < 0$).
Because there are only finitely many major addresses in our memory abstraction,
we exempt constraints describing the major address portion of a pointer and
instead try to track the precise list of all possibilities (a threshold is used
to overapproximate if even this gets too large).

\subsubsection{Scapegoating and Other Heuristics}
For space reasons, details of our other heuristics, e.g., for adding and
stepping scapegoats, are deferred to~\pref{app:Tool}.
Briefly stated, we keep the analysis precise up to a certain unrolling depth
for each loop.
Then, we add scapegoats corresponding to traces formed by dropping elements
from the input structure (e.g., the first element of an array or list).
We step those scapegoats forward until the two loops come in-sync, i.e.,
pointers to input nodes point to the same thing in the primary and the
scapegoat, and integer loop indices into arrays differ by one.
Then we step the scapegoats in lockstep with the primary trace until the loop
is exited.

\section{Evaluation}
\label{sec:Evaluation}
This section describes our benchmark set and empirical evaluation.
Experiments were run on a Debian 12 virtual machine on an Intel i9-13900.
Benchexec was used to limit RAM to 32GB and wallclock to 3 hours per
instance--tool pair.

\subsection{Benchmarks}
\label{sec:Benchmarks}
We collected a set of benchmarks verifying correctness, memory safety, and
equivalence properties of dozens of MDSTs from major real-world C projects.
The full list of projects we extracted data structure traversals from are:
Linux~\citep{linux}, NetBSD~\citep{netbsd}, OpenBSD~\citep{openbsd},
Musl~\citep{musl}, GLib~\citep{glib}, QEMU~\citep{qemu}, Redis~\citep{redis},
Zsh~\citep{zsh}, Git~\citep{git}, and GLibC~\citep{glibc}.
We divided the benchmarks into three classes: strings, lists, and trees.
Our set is more heavily weighted towards string benchmarks because all the
operations shared a standard string representation so we could construct many
benchmark instances by cross-checking them.
Examples of instances include checking:
\begin{enumerate}
    \item The Linux and NetBSD implementations of \texttt{strcmp} agree on all
        inputs.
    \item After inserting into an instance of Redis' linked list, using
        Redis' list-search routine to search for the item just inserted always
        succeeds.
    \item If a search for \texttt{x} in the \texttt{glibc} implementation of
        red-black trees succeeds, then after rotating a node in the tree, a
        subsequent search for \texttt{x} still succeeds.
\end{enumerate}
We tried to specialize the test harnesses to the tools' preferred format.
E.g., \tool{} expects the input to the operation to be taken as an
argument to the test harness, while the baseline tools expect this input to be
constructed by the harness itself.
Meanwhile, one of the baseline tools does not support tail recursion,
so for the benchmarks using recursion we provided it versions that were
manually transformed into a loop.
We also performed some tuning of the encodings, e.g., finding that the
baseline tools performed better when strings were allocated using
\texttt{malloc} rather than as VLAs on the stack, so we used those encodings.
We configured all tools to check only memory safety and user assertion
properties.
We have provided the full benchmark set with this submission.

\subsection{Baseline Tools}
We report comparisons against the baseline tools VeriAbsL~\citet{veriabsl},
PredatorHP~\citet{php}, and 2LS~\citet{2ls}.
We tried to represent the state-of-the-art in verification of heap-manipulating
C code, excluding tools like Astree~\citet{astree} without public executables,
but including tools like VeriAbsL that are publicly available only in binary
form.

We also considered MemCAD~\citet{memcad} and Ultimate Taipan~\citet{utaipan}.
Although it worked for small test programs, MemCAD threw many errors when we
tried to run it on our benchmark instances, apparently due to the use of C
features like initializing a struct pointer in a for loop.
Because of this, we could not run most of the benchmarks on MemCAD, and its
errors/documentation were not descriptive enough for us to adapt them in time
for this submission.
While Ultimate Taipan did properly parse and begin verifying our benchmarks, it
timed out or returned \texttt{Unknown} on all of them.
In both cases, we assume that the tools are probably tuned for different
classes of inputs and so we exclude them from our experiments and do not report
such negative results further.
%

It is also important to note that VeriAbsL, PredatorHP, and 2LS are competitors
in the SV-COMP competition, which involves detecting unsafe programs
quickly in addition to verifying safe ones.
Our evaluation considers only the verification of safe programs, as we suggest
detecting unsafety using a dedicated bounded model checking or fuzzing tool.
Hence, it should be kept in mind that these baseline models might perform
better if optimized to our setting.

\newcommand\res[2]{\diagbox[dir=NE]{#1}{#2}}
\newcommand\rnone{---}
\newcommand\rnative[1]{\res{---}{#1}}
\newcommand\rboth[1]{#1}
\begin{table}[t]
    \centering
    \setlength{\tabcolsep}{2pt}
    \begin{tabular}{lll|cccc|cc}
        \toprule
        Benchmark & Count & Kind
        & \tool{} & VeriAbsL    & PredatorHP & 2LS & Port.\ w/o & Port.\ w/ \\
        \midrule

        strings & 62 & solved & \textbf{58}    & 20        & 0          & 0 & 20 & \textbf{58} \\
        & & unique & \textbf{38}    & 0        & 0          & 0    & \\
        & & fastest & \textbf{51}    & 7        & 0          & 0    & \\

        \midrule

        lists & 26 & solved & \textbf{20} & 4 & 6 & 9 & 14 & \textbf{23} \\
        & & unique & \textbf{9} & 1 & 0 & 1 \\
        & & fastest & \textbf{9} & 2 & 6 & 6 \\

        \midrule

        trees & 17 & solved & 13 & 0 & 0 & \textbf{16} & 16 & \textbf{17} \\
        & & unique & 1 & 0 & 0 & \textbf{4} \\
        & & fastest & 3 & 0 & 0 & \textbf{14} \\
    \end{tabular}
    \caption{\label{tab:EvalTable} Evaluation Table. For each benchmark, the
    `solved' row shows how many instances that tool solved, the `unique' row shows
    how many instances were solved only by that tool, and the `fastest' row
    shows how many instances that tool solved faster than any other tool.
    ``Port.\ w/o'' shows the number solved by a virtual-best portfolio of all tools
    except \tool{}, while ``Port.\ w/'' shows the number solved by a
    virtual-best portfolio including \tool{}.}
\end{table}

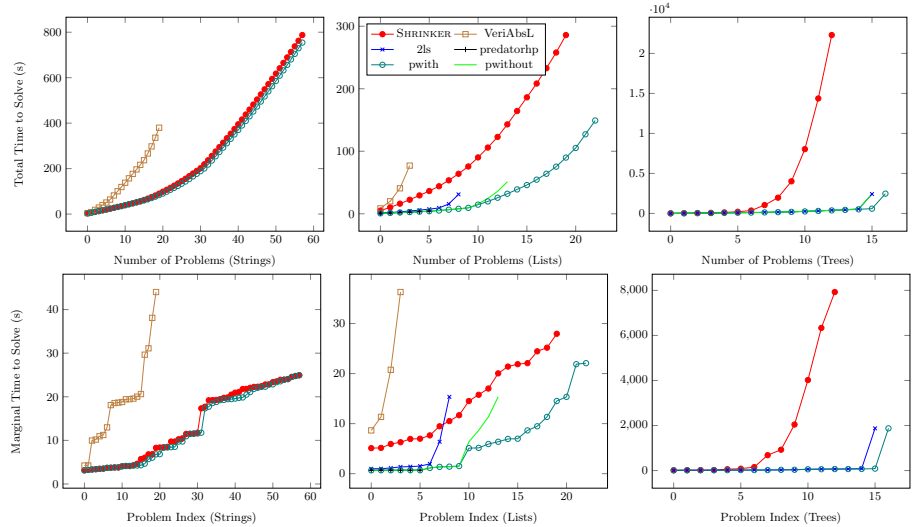
\begin{figure}[t]
    \centering
    \begin{tikzpicture}[scale=0.5]
        \begin{axis}[xlabel=Number of Problems (Strings),ylabel=Total Time to Solve (s),
                     legend pos=north west]
            \addplot[mark=*,color=red] table [col sep=comma]
                {data/processed/strings.180min.shrinker.seq.csv};
            \addplot[mark=square,color=brown] table [col sep=comma]
                {data/processed/strings.180min.veriabsl.seq.csv};
            \addplot[mark=o,color=teal] table [col sep=comma]
                {data/processed/strings.180min.pwith.seq.csv};
        \end{axis}
    \end{tikzpicture}
    \begin{tikzpicture}[scale=0.5]
        \begin{axis}[xlabel=Number of Problems (Lists),
                     legend pos=north west,legend columns=2]
            \addplot[mark=*,color=red] table [col sep=comma]
                {data/processed/lists.180min.shrinker.seq.csv};
            \addplot[mark=square,color=brown] table [col sep=comma]
                {data/processed/lists.180min.veriabsl.seq.csv};
            \addplot[mark=x,color=blue] table [col sep=comma]
                {data/processed/lists.180min.2ls.seq.csv};
            \addplot[mark=+,color=black] table [col sep=comma]
                {data/processed/lists.180min.predatorhp.seq.csv};
            \addplot[mark=o,color=teal] table [col sep=comma]
                {data/processed/lists.180min.pwith.seq.csv};
            \addplot[mark=.,color=green] table [col sep=comma]
                {data/processed/lists.180min.pwithout.seq.csv};
            \legend{\tool{},VeriAbsL,2ls,predatorhp,pwith,pwithout}
        \end{axis}
    \end{tikzpicture}
    \begin{tikzpicture}[scale=0.5]
        \begin{axis}[xlabel=Number of Problems (Trees),
                     legend pos=north west]
            \addplot[mark=*,color=red] table [col sep=comma]
                {data/processed/trees.180min.shrinker.seq.csv};
            \addplot[mark=x,color=blue] table [col sep=comma]
                {data/processed/trees.180min.2ls.seq.csv};
            \addplot[mark=o,color=teal] table [col sep=comma]
                {data/processed/trees.180min.pwith.seq.csv};
            \addplot[mark=.,color=green] table [col sep=comma]
                {data/processed/trees.180min.pwithout.seq.csv};
        \end{axis}
    \end{tikzpicture}
    \begin{tikzpicture}[scale=0.5]
        \begin{axis}[xlabel=Problem Index (Strings),ylabel=Marginal Time to Solve (s),
                     legend pos=north west]
            \addplot[mark=*,color=red] table [col sep=comma]
                {data/processed/strings.180min.shrinker.par.csv};
            \addplot[mark=square,color=brown] table [col sep=comma]
                {data/processed/strings.180min.veriabsl.par.csv};
            \addplot[mark=o,color=teal] table [col sep=comma]
                {data/processed/strings.180min.pwith.par.csv};
        \end{axis}
    \end{tikzpicture}
    \begin{tikzpicture}[scale=0.5]
        \begin{axis}[xlabel=Problem Index (Lists),
                     legend pos=north west,legend columns=2]
            \addplot[mark=*,color=red] table [col sep=comma]
                {data/processed/lists.180min.shrinker.par.csv};
            \addplot[mark=square,color=brown] table [col sep=comma]
                {data/processed/lists.180min.veriabsl.par.csv};
            \addplot[mark=x,color=blue] table [col sep=comma]
                {data/processed/lists.180min.2ls.par.csv};
            \addplot[mark=+,color=black] table [col sep=comma]
                {data/processed/lists.180min.predatorhp.par.csv};
            \addplot[mark=o,color=teal] table [col sep=comma]
                {data/processed/lists.180min.pwith.par.csv};
            \addplot[mark=.,color=green] table [col sep=comma]
                {data/processed/lists.180min.pwithout.par.csv};
        \end{axis}
    \end{tikzpicture}
    \begin{tikzpicture}[scale=0.5]
        \begin{axis}[xlabel=Problem Index (Trees),
                     legend pos=north west]
            \addplot[mark=*,color=red] table [col sep=comma]
                {data/processed/trees.180min.shrinker.par.csv};
            \addplot[mark=x,color=blue] table [col sep=comma]
                {data/processed/trees.180min.2ls.par.csv};
            \addplot[mark=o,color=teal] table [col sep=comma]
                {data/processed/trees.180min.pwith.par.csv};
        \end{axis}
    \end{tikzpicture}
    \caption{\label{fig:Cactus} Cactus plots.
    A point $(n, t)$ on the top row indicates the tool can solve $n$ of the
    benchmarks in $t$ total seconds.
    A point $(i, t)$ on the bottom row indicates the tool can solve the
    $i$th easiest (for it) benchmark in $t$ seconds; prefix-summing the bottom
    row gives the top row.
    In all cases, curves lower (faster) and to the right (solving more
    problems) are better.
    We also give curve corresponding to the virtual-best portfolio (i.e.,
    assuming a perfect heuristic that picks the best solver out of the four for
    that instance) both with (\texttt{pwith}) and without (\texttt{pwithout})
    \tool{} (for strings and trees, only one other tool solved any instances so
    the ``portfolio without'' line is identical to the other tool's curve).
    For both strings and lists, \tool{} on its own always solves more instances
    than any other tool, is within the same order of magnitude of time as
    the other tools (sometimes faster), and leads to significant improvements
    in the portfolio performance.
    For trees, \tool{} is considerably slower than the best tool (2ls), but its
    inclusion in the portfolio results in solving one additional benchmark.
    }
\end{figure}

\subsection{Results}
Our results are summarized in~\pref{tab:EvalTable} and visualized as cactus
plots in~\pref{fig:Cactus}.

For both the string and list benchmark classes, \tool{} is the single best
solver.
It is able to solve more than double the number of instances compared with the
second-best solver.
In fact, in both cases \tool{} is able to solve many benchmarks that were not
solved by any other baseline solver.
Furthermore, it does so in a reasonably small ($\leq$ 30s) amount of time per
benchmark.
There remain only four string instances unsolved, all of which involve
\texttt{strcat}, which is implemented in a complicated way for \tool{} to
follow (namely, the loop iterates simultaneously from the middle of the
destination array and the start of the source array).
On the whole, however, these results indicate \tool{} is particularly good at
solving monotonic string and list instances.

On tree benchmarks, \tool{} performs quite well, solving over 75\% of the tree
benchmarks while neither VeriAbsL nor PredatorHP solved any.
However, 2LS performs surprisingly well (solving all but one), hence \tool{}
takes second-place when looking at individual solvers.
\tool{} took longer to solve the tree benchmarks than the string and list
benchmarks because it performs path splitting up to a certain unrolling depth,
and tree-manipulating programs have an extra exponential blowup in the number
of paths (because there is a choice between left/right child during each
traversal).
This path splitting is not strictly required by \approach{}, but is needed for
\tool{} to increase precision in lieu of a more precise abstract domain
(see~\pref{app:TreeBad} for more discussion).
%
Nonetheless, \tool{} was able to solve the one instance left unsolved by 2LS.

In fact, for all three classes of benchmarks we find that adding \tool{} to a
virtual-best portfolio solver would allow it to solve more benchmarks than
would be possible without \tool{} (compare the ``Port. w/o'' and ``Port. w/''
columns in~\pref{tab:EvalTable} and the `pwith' and `pwithout' curves
in~\pref{fig:Cactus}).
%
%
Hence, in addition to being a compelling stand-alone solver for monotonic
string and list benchmarks, \tool{} could make a good addition to a portfolio
like VeriAbsL.

\section{Related Work}
\label{sec:Related}

\paragraph{Induction on Input Size.}
\citet{fpi,diffy} (integrated into the VeriAbsL portfolio) use rules to rewrite
array programs $P_N$ into a tail recursive form $P_N = \delta P_N; P_{N-1}$ and
prove correctness by inducting on the size of the input.
\citet{squeeze} describe a similar approach for array verification.
%
%
\Approach{} generalizes these ideas to a framework parameterized by
data type, measure of size, and abstraction.

\paragraph{Cyclic Proof Systems.}
Proof by infinite descent also forms the basis of
\emph{cyclic proof systems} \cite{cyclic1,cyclic2,cyclic3}.
Our main contribution is \approach{}, which is a general framework for
combining proof by infinite descent with abstract interpretation to form a
parameterized, general method of automatically proving properties about
imperative programs.
%

\paragraph{Abstract Interpretation.}
Traditional static analysis is formalized by abstract
interpretation~\cite{cousot77,ikos,astree}, and we described one formulation of
it in~\pref{sec:Background}.
%
Abstract interpreters use special abstractions of the
heap~\cite{seplogic,tvla,biabduction}.
\cite{tvla} introduced the summary nodes idea we adapted
in~\pref{sec:Implementation}.
Unfortunately, the space of such heap invariants is large, making discovering
them hard.
%
\Approach{} builds on this framework to verify monotonic data structure
traversals even without complicated heap abstractions.

\paragraph{Relational Verification.}
Reasoning about pairs of traces is part of the broader field of
\emph{relational program verification}~\citep{reltrends} of
\emph{hyperproperties}~\citep{hyperprops}.
The standard method for relational program verification is to reduce it to
nonrelational program verification on a \emph{product program} that simulates both
traces together~\citep{prodprogs,dat,covac}.
The approach taken in this paper is more similar to tools that verify
relational properties directly, without constructing a product
program~\citep{soundse,relse,circequiv}.



\paragraph{Ordering States.}
Partial order reduction, abstraction, bisimulation, symmetry-breaking, and
state merging all involve establishing an order on program
states~\cite{wellstructured,dillsymmetry,mcabs,odpor,dpor,rwset,viewabs,clarkeviewabs,commabs,heapwqo,datawqo}.
These methods generally require much stronger orderings on the traces, and give
much stronger guarantees.
E.g.,~\cite{wellstructured} guarantees completeness when there is a
well-ordering between reachable traces.
In particular, ordering traces according to their length, heap size, or input
size does not meet their requirements.
As their results generally apply to temporal analyses, our \approach{}
approach is orthogonal and complementary.

\paragraph{Completeness Thresholds and Small Model Properties.}
Our approach of reducing the size of crash traces  is similar the goal of
\emph{completeness thresholds} research~\citep{cts,lcts,lctsagain}.
Existing work in that area does not immediately apply to heap-manipulating C
programs.
\emph{Small model properties} are a similar notion in the automated reasoning
community~\citep{howsmall} including some results for polymorphic
programs~\citep{polyquickcheck,logquickcheck} and theories modeling the
heap~\citep{slhsmp,sllsmp,strand}.
%

\paragraph{Non-Temporal Analyses.}
%
\citet{proofsfromtests} use concrete tests to prove program properties, where
the tests are generated dynamically by a verification
engine.
\cite{rubytypes} show how to soundly infer static types from finitely
many test executions.
\cite{singlepass,uninterp} show that proving properties of
a restricted class of heap-manipulating imperative programs is decidable.
\cite{algpa} translates the program to a set of
recurrence relations and tries to find a closed-form solution implying
correctness properties.

\paragraph{Testing and BMC.}
Testing~\cite{aflpp,fuzzing,tdd,testingvsproofs,symex,dart,exe,cbmc,alloy}
cannot directly rule out the existence of bugs on inputs not tested.
Ways to pick test inputs are
known~\cite{inputcov,catpart,classtrees,codecov,pct}, and
our work can be interpreted as a method for proving, for a particular program,
that the small scope hypothesis holds~\citep{smallscopeeval,alloy}.
Bounded model checking tools can prove a property holds for \emph{every}
program trace up to a certain length~\citep{symex,cbmc}.
When used as a bug checker, \approach{} tends to report bugs quicker than BMC
because it uses an abstraction, i.e., it is allowed to report false alarms.
But in practice BMC can usually detect buggy variants of our MDSTs in a few
seconds, and gives counterexamples.
Hence the real challenge for these instances, and benefit of \tool{}, is in
proving safety.

\paragraph{Manual Program Analysis.}
Logics and tools for manually proving correctness
exist~\cite{floyd,hoare,vcc,fscq,coq}. In contrast to our fully automated
approach, manual proof tools require the programmer to
annotate the code with loop invariants. \cite{seplogic},
\citep{grass}, and~\cite{eprll} can express heap invariants.
%


\clearpage
\subsubsection{Acknowledgements}
I would like to thank the anonymous reviewers, whose suggestions have
dramatically improved the quality of the paper;
Dawson Engler, Alex Ozdemir, Clark Barrett, David K.\ Zhang, Geoff Ramseyer,
Alex Aiken, Martin Brain, and Aditya V.\ Thakur for extended discussions
affecting and/or significant feedback improving this work;
as well as Zachary Yedidia, Akshay Srivatsan, and attendees of the Stanford
software lunch, who provided helpful insights, conversations, and proofreading.
This work was generously funded via grants NSF DGE-1656518 and Stanford IOG
Research Hub 281101-1-UDCPQ 298911.

\subsubsection{Disclosure of Interests}
This work was generously funded via grants NSF DGE-1656518 and Stanford IOG
Research Hub 281101-1-UDCPQ 298911.

\bibliography{main}
\bibliographystyle{ACM-Reference-Format}

\appendix
\clearpage
\section{Additional Implementation Details}
\label{app:Tool}
This section complements~\pref{sec:Implementation} with more details on the
heuristics implemented by~\tool{}.
Many of the subsections refer directly to the heuristics and routines called
in~\pref{alg:OurAlgo}; the reader should refer to~\pref{sec:HerdAbstraction}
for the requirements placed on each one.
Below we use \texttt{pc} to refer to the program counter, i.e., what the next
instruction (line) to be executed is.

\subsection{Abstract Paths}
\label{app:AbstractPaths}
To assist in the heuristics we associate each abstract herd with a
\emph{abstract path} indicating the sequence of instructions associated with
selected states in the primary trace.
For example, an abstract herd having associated abstract path \texttt{p1, p3,
*, p7, *, p8} indicates that every primary trace of every herd in its
concretization set has \texttt{pc=p1} in its first state, then \texttt{pc=p3}
in its second, then zero or more other states (e.g., many executions of a
loop), then a state with \texttt{pc=p7}, then zero or more other states, and
finally a state with \texttt{pc=p8}.
The abstract path is used to control precision loss and widening in our
heuristics; see below for more details.
We replace loop iterations after a certain unrolling level with \texttt{*} to
ensure that the set of possible abstract paths is bounded (we may also do this
before the unrolling level is reached if a summary node is accessed by the
program; see below).
Our implementation of $\hWiden{}$ works by joining states that share the same
abstract path, hence having a larger unrolling level makes the analysis
significantly more precise.

\subsection{Scapegoat Construction ($\hMaybeAddScapegoats$)}
Recall that we need a way to add scapegoat traces to the trace.
Our strategy waits until enough steps and splits have occurred to guarantee
that the program input has size greater than some user-specified size bound $k$
(in our evaluation, we start this size bound at $k=1$ and retry with
incrementally higher bounds if it fails).
Once we know the input is large enough, we apply one of two shrinking rules to
the first state of the primary to construct a new initial trace for the
scapegoat:
\begin{enumerate}
    \item If the input is a linked structure, we construct a scapegoat trace
        formed by removing one of the first $k$ reachable nodes in the structure
        (updating the other pointers to skip over it). When $k > 1$, i.e.,
        multiple nodes are guaranteed to be reachable, we add separate
        scapegoats skipping over each of them.
    \item If the input is an array, and it has length at least 1, we add a
        scapegoat trace formed by removing exactly one of the first $k$
        elements from the array. To do so, we decrement the array's length,
        increment the pointer to the array's first element, update the base and
        bound values used for memory checking, and then move elements among the
        first $k$ to simulate deleting the desired element. Once again, we add
        a separate scapegoat for each of the $k$ array elements that we
        consider deleting.
\end{enumerate}
The resulting scapegoat traces start off with only one, length-1 trace.
Since dropping the head of an array or skipping a node in a linked structure
still results in a valid inital trace, these operations satisfy the constraints
on $\hMaybeAddScapegoats$.
Notably, all of this happens in the abstract, i.e., rather than constructing a
concrete trace we add all of the constraints that would result from such a
construction to our list of constraints.

\subsection{Abstract Herd Splitting ($\hSplit$)}
After a branch operation on the primary trace, $\hSplit$ is used to partition
the abstract trace herd into separate abstract trace herds representing each
possible branch outcome.
For example, if we encounter a branch that goes to \texttt{pc=p5} when $x=0$ or
\texttt{pc=p8} otherwise, we will duplicate the abstract trace herd $\pur{a'}$
into (1) $\pur{a'_1}$ with the additional constraints ``$x = 0$ and
\texttt{pc=p5} in the most recent state'' and (2) $\pur{a'_2}$ with the
additional constraints ``$x \neq 0$ and \texttt{pc=p8} in the most recent
state.''
In this way, we always know the exact next-to-be-executed instruction before
calling $\hStepSharp$.

Furthermore, if the next instruction will access a pointer in the program
(e.g., by dereferencing it or comparing it against another pointer), we also
split on the possible memory locations that the pointer could point to.
In this way, $\hStepSharp$ always knows exactly which memory location is being
referred to on all pointer operations.

\subsection{Stepping Operations ($\hStepSharp$, $\hStepE$)}
We implement $\hStepSharp$ and $\hStepE$ by adding new constraints that relate
values in a new last state of the trace to those in the previously last (now
second-to-last) state.
Every constraint previously referring to, e.g., ``the second-to-last state'' is
updated to refer to ``the third-to-last state,'' and new constraints are added
to define the now-last state.
For $\hStepE$ (which is used to advance the non-primary, scapegoat traces) we
drop any scapegoat traces where the \texttt{pc} on the final state of the trace
is unknown.
Note that $\hSplit$ is used to enforce a similar behavior for $\hStepSharp$,
i.e., the primary trace.
Thus for every abstract trace herd in the worklist, we know the next
instruction to be executed for all of the traces in every herd in the
concretization set.

We allow the user to request nondeterministic values.
$\hStepSharp$ implements this by asserting equality between the output register
and a fresh (unconstrained except for type bounds) variable.
Recall, however, that $\hStepE$ is allowed to guess nondeterministic values.
To do so, it identifies the most similar state in the primary trace (using a
heuristic based on local variables described below) and asserts that the output
register in the scapegoat trace takes on the same value that was returned in
that step of the primary trace.

\subsection{Stepper Heuristic}
Recall at each step we must determine how far to advance each of the scapegoat
traces.
We annotate all the loops with `entrance,' `iteration,' and `end' instructions,
and associate each loop with the set of `relevant local variables,' i.e., those
that it may write to in the loop body (e.g., an array iteration might write to
a counter \texttt{i}, a list traversal might write to a pointer \texttt{l},
etc.).
We only ever advance scapegoat traces after the primary has executed an
`iteration' instruction.
Then we apply $\hStepE$ repeatedly to advance the scapegoat trace until we can
prove that it (1) reaches the same iteration instruction and (2) the relevant local
variables in the primary trace have the same values as those in the scapegoat
trace (a difference of 1 is allowed for integer variables to account for the
fact that we have have shrunk the input's size by one).
If a scapegoat trace ever reaches a branch instruction where the branch to be
taken is not already implied by existing constraints, we remove that scapegoat
trace from the herd.


\subsection{State Querying ($\hCanFail$, $\hCanBlame$, $\hMorePrecise$)}
$\hCanFail$ and $\hCanBlame$ are implemented by checking whether the set of
constraints implies the conditions needed: $\hCanFail$ checks whether the
program counter on the final state could be a failure operation, while
$\hCanBlame$ checks whether the constraints imply both that the scapegoat trace
has a smaller size and definitely has the same failing program counter.
$\hMorePrecise(\pur{a}, \pur{b})$ is implemented by checking whether every
constraint in $\pur{b}$ is also in $\pur{a}$.

\subsection{Dataflow Optimizations}
To minimize the amount of analysis that needs to be done by our
\approach{}-based analysis, we apply simple dataflow-based optimizations first
to, e.g., elide obviously duplicate checks, eliminate common subexpressions,
and delete dead code.
These optimizations are carefully designed to ensure that they never remove a
bug from the program, i.e., we can only remove a pointer validity check if we
know that it would only fail if some earlier pointer validity check would have
failed before reaching it.

\subsection{Parallelization}
\label{app:Parallel}
One major optimization we applied to~\pref{alg:OurAlgo} was to parallelize the
main loop.
We start with one worker process that is running the algorithm in a sequential
manner as described in~\pref{sec:Theory}.
When fewer than some user-specified maximum number of worker processes are
running, worker processes will attempt to split their worklists in two to use
the idle machine cores; the two resulting workers perform~\pref{alg:OurAlgo} as
normal, but only on their own half of the original worker's worklist.
This parallelization of the main loop is sound~(\pref{lem:OurAlgo} still holds,
guaranteeing that \emph{at least one worker} processed an abstract herd
accounting for any given reachable trace), but can introduce nondeterminism if
done na\"ively (because $\hWiden$ may rely on the other elements in the
worklist when deciding how much to widen).
To avoid this, we only split off work when we can guarantee that $\hWiden$ in
one partition would never use the elements in the other partition; because
$\hWiden$ joins only elements with the same abstract path, this corresponds to
checking that their paths all have disjoint prefixes.

\subsection{Measure of Trace Size}
\label{app:TraceSize}
\Approach{} requires the analysis designer to specify the measure of trace
size.
The tool is sound regardless of this choice; it only affects completeness.
Ideally, the measure ensures that scapegoat traces (as added by
$\hMaybeAddScapegoats$, which in \tool{} drops one item from the input
structure) are smaller than their primary.
\tool{} takes the size to be the number of allocated memory regions reachable
from the input arguments plus the number of elements in arrays reachable from
the input arguments plus the number of times \texttt{malloc} was called.
This captures the number of items allocated on the heap, and does indeed get
smaller as we drop elements from input structures.
Otherwise, not much thought went into this choice.
We have every reason to expect the tool could perform as well with many
different notions of size, such as the number of bytes allocated on the heap or
the length of the trace.
One practical benefit of counting the number of allocation regions rather than
raw number of bytes allocated or number of IR instructions executed is that it
was more interpretable when debugging \tool{}: dropping a single node in a
linked list input structure only changes the number of allocated regions by 1,
but it changes the raw number of allocated bytes by the difficult-to-eyeball
quantity \texttt{sizeof(struct list\_node)} and changes the number of executed
instructions by an even harder-to-predict number.

\subsection{Disjoint Heaps Only Applies to Harness}
\label{app:DisjointHeapsInterior}
In~\pref{sec:Implementation} we described how the program is verified under the
assumption that the test harness is called with inputs that point to disjoint
heaps.
But this assumption \emph{does not} apply to calls made by the test harness.
For example, the following program test harness is allowed by \tool{} and
correctly checks whether \texttt{copy\_ints} correctly handles overlapping
source and destination regions (i.e., \texttt{memmove} vs.\ \texttt{memcpy}
semantics).
\begin{minted}{C}
// ... eq_arrays, copy_ints assumed to be defined here ...
struct array { int *data; unsigned n_data; }
void test(struct array A, struct array B) {
    __VERIFIER_assume(eq_arrays(A, B));
    __VERIFIER_assume(A.n_data >= 2);

    // The aliasing here is allowed by SHRINKER
    copy_ints(A.data + 1, A.data, A.n_data - 1);

    for (unsigned i = 1; i < B.n_data; i++)
        __VERIFIER_assert(A.data[i] == B.data[i - 1]);
}
\end{minted}

\subsection{Array Types}
\label{app:ArrayTypes}
To simplify parsing, \tool{} rejects programs that declare array-typed values
in C.
In general, arrays are second-class types in C (e.g., they automatically decay
to pointers in most contexts) and can usually be replaced with pointers.
For example, \tool{} would reject the following program, which declares the
variable \texttt{string} having an explicit array type:
\begin{minted}{C}
void test(unsigned n) {
    char string[n];
    // ... "string" is an array of "n" chars ...
}
\end{minted}
Recall from~\pref{sec:Implementation} that parameters to the test harness
having a struct type with a pointer field \texttt{X} and integer field
\texttt{n\_X} are interpreted by \tool{} as arrays.
So the above program can instead be rewritten to use pointers as below, which
accomplishes the original intention and will be accepted by \tool{}.
\begin{minted}{C}
struct string { char *string; unsigned n_string; }
void test(struct string string) {
    // ... "string.string" points to "string.n_string" chars ...
}
\end{minted}

\subsection{Reasoning About Array Equality}
In order to effectively perform \approach{}, \tool{} needs to be able to track
that certain arrays are identical between the primary and scapegoat traces
(e.g., that an array in the sacpegoat trace is equal to the corresponding array
in the primary with its first element removed).
We do this by encoding arrays as uninterpreted objects, which allows us to
track equality constraints through the application of other uninterpreted
functions like \texttt{store} and \texttt{select}.
For example, we might have the following constraint, stating that some array is
identical between states 10 of the primary and scapegoat traces:
\begin{verbatim}
array_1_in_primary_state_10 = array_1_in_scapegoat_state_10
\end{verbatim}
Then, suppose we apply $\hStepSharp$ and $\hStepE$ to add new constraints
stepping each trace forward once.
If this involves writing a value \texttt{V} to index \texttt{K} of each array,
the constraints would look like:
\begin{verbatim}
array_1_in_primary_state_10 = array_1_in_scapegoat_state_10
array_1_in_primary_state_11
    = store(array_1_in_primary_state_10, K, V)
array_1_in_scapegoat_state_11
    = store(array_1_in_scapegoat_state_10, K, V)
\end{verbatim}
Then our abstract domain implementation, which propagates uninterpreted
function equalities, can conclude from this that the array is still equal in
state 11 of the primary and scapegoats.
\begin{verbatim}
array_1_in_primary_state_11 = array_1_in_scapegoat_state_11
\end{verbatim}

\section{Extended Worked Example}
\label{app:Worked}
We now work through a small example showing how \tool{} can use
\pref{alg:OurAlgo} to prove correctness of a simple heap-manipulating program.
In the below we will say an abstract herd $\pur{a}$ ``represents herds \ldots''
to mean the concretization set $\hgamma(\pur{a})$ consists of such herds.
Recall our running example from~\pref{sec:Motivating}:
\begin{minted}{C}
struct arr { int *data; int n_data; };
void test(struct arr arr) {
    for (int i = 0; i < arr.n_data; i++)
        arr.data[i] = 0;
    for (int i = 0; i < arr.n_data; i++)
         if (arr.data[i] != 0)
             __VERIFIER_fail(); }
\end{minted}
Recall this is an unusually simple example for the sake of exposition; our
actual benchmark instances are more complicated.
Furthermore, the actual execution of \tool{} works at a very low level
(essentially tracking the values of dozens of registers in addition to base and
bound arrays for memory checking instrumentation after lowering this code), so
we have tried for exposition reasons to present the intermediate states in a
relatively succinct way.
In particular, in the example below we assume the program transition is very
coarse-grained (e.g., processes the entirety of line 4 in a single step).

Furthermore, instead of showing the intermediate steps of the algorithm, we
only show the resulting proof, i.e., a set of abstract herds that are
\emph{closed}, i.e., if $\blue{t}$ is a trace accounted for by one of the
abstract herds in the set, and $\blue{t'} \in \cStep(\blue{t})$, then
$\blue{t'}$ is also accounted for by some abstract herd in the set.
In other words, these abstract herds can be thought of as the contents of the
\texttt{seen} set in~\pref{alg:OurAlgo}.
The relation can be visualized in the following graph, where each node is one
of the abstract herds, and if $\blue{t}$ is accounted for by some node in the
graph and $\blue{t'} \in \cStep(\blue{t})$, then $\blue{t'}$ is accounted for
by one of the successor nodes in the graph.

\begin{center}
    \begin{tikzpicture}
        \node (A1) at (0, 0) {$\pur{a_1} = \pur{I^{H\sharp}}$};
        \node (A2) at (2, 1) {$\pur{a_2}$};
        \node (A3) at (2, 0) {$\pur{a_3}$};
        \node (A4) at (4, 1) {$\pur{a_4}$};
        \node (A5) at (6, 1) {$\pur{a_5}$};
        \node (A6) at (4, 0) {$\pur{a_6}$};
        \node (A7) at (6, 0) {$\pur{a_7}$};
        \node (A8) at (8, 0) {$\pur{a_8}$};
        \node (A9) at (10, 0) {$\pur{a_9}$};

        \draw[->] (A1) -- (A2);
        \draw[->] (A1) -- (A3);
        \draw[->] (A3) -- (A4);
        \draw[->] (A4) -- (A5);
        \draw[->] (A3) -- (A6);
        \path[->] (A6) edge [loop below] (A6);
        \draw[->] (A6) -- (A7);
        \draw[->] (A7) -- (A8);
        \path[->] (A8) edge [loop below] (A8);
        \draw[->] (A8) -- (A9);
    \end{tikzpicture}
\end{center}

Some structure of the program is visible in this graph, e.g., $\pur{a_2}$
corresponds to the program exiting quickly if \texttt{arr.n\_data = 0}, while
$\pur{a_4}$ and $\pur{a_5}$ correspond to the program exiting after only one
iteration of each loop if \texttt{arr.n\_data = 1}.
The remaining abstract herds capture only traces where \texttt{arr.n\_data >
1}: $\pur{a_6}$ captures subsequent iterations of the first loop, $\pur{a_7}$
captures the first iteration of the second loop, $\pur{a_8}$ captures
subsequent iterations of the second loop, and $\pur{a_9}$ captures any
iterations of the second loop that might reach the
\texttt{\_\_VERIFIER\_fail()} statement.
Notably, the analysis knows easily that it is not possible for $\pur{a_7}$
(first iteration of the second loop) to reach failure because it tracks the
possible values of \texttt{arr.data[0]}; but it does not do the same for
$\pur{a_9}$ (subsequent iterations of the second loop) because that would
require tracking the arbitrarily many possible values in the rest of
\texttt{arr.data[1:n]}.
Because of this, the analyzer was not able to rule out the possibility of
$\pur{a_9}$ without the use of \approach{}, which tells it that blame can be
transferred onto the scapegoat trace in $\pur{a_9}$, hence no error need be
reported there.

We now describe each of the abstract herds.
Recall that an abstract herd can be expressed via constraints, where the
concretization set is all of the herds that satisfy those constraints.
In the below, we assume every state has a \texttt{pc} indicating the line that
is about to execute next.
Lines 3 and 5 indicate checking the \texttt{i < arr.n\_data} condition.
Some constraints need to relate the values between different traces; for this
we write $\hidx{h}{1}(\mathtt{foo})$ to mean ``the value of $\mathtt{foo}$ in
the last state of trace $\hidx{h}{1}$.''
We also use \texttt{arr.data[i:j]} to mean ``the subarray pointed to by
\texttt{arr.data} from index \texttt{i} (inclusive) to index \texttt{j}
(exclusive).''

\begin{itemize}
    \item $\pur{a_1} = \pur{I^{H\sharp}}$: (initial state)
        \begin{itemize}
            \item $\hidx{h}{1}$ has length 1:
                \begin{itemize}
                    \item First state: \texttt{arr.n\_data >= 0}, \texttt{i =
                        0}, \texttt{pc = 3}.
                \end{itemize}
        \end{itemize}
        Successors: $\pur{a_2}$ (empty), $\pur{a_3}$ (nonempty).

    \item $\pur{a_2}$: (empty array)
        \begin{itemize}
            \item $\hidx{h}{1}$ has length 2:
            \begin{itemize}
                    \item First state: \texttt{arr.n\_data = 0},
                        \texttt{i = 0}, \texttt{pc = 3}.
                    \item Second state: identical except \texttt{pc = 5}.
            \end{itemize}
        \end{itemize}
        Successors: none (constraints imply the primary trace reaches exit
        immediately after this state).

    \item $\pur{a_3}$: (finished first iteration of first loop)
        \begin{itemize}
            \item $\hidx{h}{1}$ has length 2:
                \begin{itemize}
                    \item First state: \texttt{arr.n\_data >= 1},
                        \texttt{i = 0}, \texttt{pc = 3}.
                    \item Second state: identical except \texttt{arr.data[0] =
                        0}, \texttt{i = 1}.
                \end{itemize}
            \item $\hidx{h}{2}$ has length 1:
                \begin{itemize}
                    \item First state: $\mathtt{arr.n\_data} =
                        \hidx{h}{1}(\mathtt{arr.n\_data}-1)$, $\mathtt{pc =
                        3}$, \texttt{i = 0}, and
                        $\mathtt{arr.data[0:\mathtt{arr.n\_data}]} =
                        \hidx{h}{1}(\mathtt{arr.data}[1:\mathtt{arr.n\_data}])$.
                \end{itemize}
        \end{itemize}
        Successors: $\pur{a_4}$ (finished), $\pur{a_6}$ (unfinished).

    \item $\pur{a_4}$: (array size exactly 1)
        \begin{itemize}
            \item $\hidx{h}{1}$ has length 3:
                \begin{itemize}
                    \item First state: \texttt{arr.n\_data = 1}, \texttt{i =
                        0}, \texttt{pc = 3}.
                    \item Second state: identical except \texttt{arr.data[0] =
                        0}, \texttt{i = 1}.
                    \item Third state: identical except \texttt{i = 0},
                        \texttt{pc = 5}.
                \end{itemize}
            \item ($\hidx{h}{2}$ gets dropped)
        \end{itemize}
        Successors: $\pur{a_5}$

    \item $\pur{a_5}$: (array size exactly 1)
        \begin{itemize}
            \item $\hidx{h}{1}$ has length 4:
                \begin{itemize}
                    \item First state: \texttt{arr.n\_data = 1}, \texttt{i =
                        0}, \texttt{pc = 3}.
                    \item Second state: identical except \texttt{arr.data[0] =
                        0}, \texttt{i = 1}.
                    \item Third state: identical except \texttt{i = 0},
                        \texttt{pc = 5}.
                    \item Fourth state: identical except \texttt{i = 1},
                        \texttt{pc = 5}.
                \end{itemize}
        \end{itemize}
        Successors: (none; the program immediately exits after this.)

    \item $\pur{a_6}$: (>1 iterations of the first loop)
        \begin{itemize}
            \item $\hidx{h}{1}$ has length $\geq 3$:
                \begin{itemize}
                    \item (Earlier states unconstrained)
                    \item Final state: \texttt{arr.n\_data >= 2},
                        \texttt{arr.data[0] = 0}, \texttt{i >= 2}, \texttt{pc =
                        3}.
                \end{itemize}
            \item $\hidx{h}{2}$ has length $\geq 2$:
                \begin{itemize}
                    \item (Early states unconstrained)
                    \item Last state: $\mathtt{arr.n\_data} =
                        \hidx{h}{1}(\mathtt{arr.n\_data}-1)$, $\mathtt{pc =
                        3}$, $\mathtt{i} = \hidx{h}{1}(\mathtt{i} - 1)$, and
                        $\mathtt{arr.data[0:\mathtt{arr.n\_data}]} =
                        \hidx{h}{1}(\mathtt{arr.data}[1:\mathtt{arr.n\_data}])$.
                \end{itemize}
        \end{itemize}
        Successors: $\pur{a_7}$ (finished), $\pur{a_6}$ (unfinished).

    \item $\pur{a_7}$: (start of second loop, after >1 iterations of the first loop)
        \begin{itemize}
            \item $\hidx{h}{1}$ has length $\geq 3$:
                \begin{itemize}
                    \item (Earlier states unconstrained)
                    \item Final state: \texttt{arr.n\_data >= 2},
                        \texttt{arr.data[0] = 0}, \texttt{i = 0}, \texttt{pc =
                        5}.
                \end{itemize}
            \item $\hidx{h}{2}$ has length $\geq 2$:
                \begin{itemize}
                    \item (Early states unconstrained)
                    \item Last state: $\mathtt{arr.n\_data} =
                        \hidx{h}{1}(\mathtt{arr.n\_data}-1)$, $\mathtt{pc =
                        5}$, $\mathtt{i} = 0$, and
                        $\mathtt{arr.data[0:\mathtt{arr.n\_data}]} =
                        \hidx{h}{1}(\mathtt{arr.data}[1:\mathtt{arr.n\_data}])$.
                \end{itemize}
        \end{itemize}
        Successors: $\pur{a_8}$ (only step the primary forward; failure not
        possible in the primary because we know \texttt{arr.data[0] = 0})

    \item $\pur{a_8}$: (>1 iterations of second loop, after >1 iterations of
        the first loop)
        \begin{itemize}
            \item $\hidx{h}{1}$ has length $\geq 3$:
                \begin{itemize}
                    \item (Earlier states unconstrained)
                    \item Final state: \texttt{arr.n\_data >= 2},
                        \texttt{arr.data[0] = 0}, \texttt{i >= 1}, \texttt{pc =
                        5}.
                \end{itemize}
            \item $\hidx{h}{2}$ has length $\geq 2$:
                \begin{itemize}
                    \item (Early states unconstrained)
                    \item Last state: $\mathtt{arr.n\_data} =
                        \hidx{h}{1}(\mathtt{arr.n\_data}-1)$, $\mathtt{pc =
                        5}$, $\mathtt{i} = \hidx{h}{1}(\mathtt{i} - 1)$, and
                        $\mathtt{arr.data[0:\mathtt{arr.n\_data}]} =
                        \hidx{h}{1}(\mathtt{arr.data}[1:\mathtt{arr.n\_data}])$.
                \end{itemize}
        \end{itemize}
        Successors: $\pur{a_8}$ (no failure, unfinished), $\pur{a_9}$
        (failure), (the final branch with \texttt{i=arr.n\_data} results in
        immediate program exit, not shown).

    \item $\pur{a_9}$: (failure after >1 iterations of second loop)
        \begin{itemize}
            \item $\hidx{h}{1}$ has length $\geq 3$:
                \begin{itemize}
                    \item (Earlier states unconstrained)
                    \item Final state: \texttt{arr.n\_data >= 2},
                        \texttt{arr.data[0] = 0}, \texttt{i >= 1}, \texttt{pc =
                        7}.
                \end{itemize}
            \item $\hidx{h}{2}$ has length $\geq 2$:
                \begin{itemize}
                    \item (Early states unconstrained)
                    \item Last state: $\mathtt{arr.n\_data} =
                        \hidx{h}{1}(\mathtt{arr.n\_data}-1)$, $\mathtt{pc =
                        7}$, $\mathtt{i} = \hidx{h}{1}(\mathtt{i} - 1)$, and
                        $\mathtt{arr.data[0:\mathtt{arr.n\_data}]} =
                        \hidx{h}{1}(\mathtt{arr.data}[1:\mathtt{arr.n\_data}])$.
                \end{itemize}
        \end{itemize}
        Note that both have reached failure, so $\hCanBlame$ is true and we do
        not need to report a potential failure.

        Successors: (none; after failure the program halts)
\end{itemize}

\section{Limitations and Future Work}
\label{app:Limitations}
We now discuss some major limitations of and future work for \approach{} in
general and \tool{} in particular.

\subsection{Performance on non-MDST Instances}
\tool{} and \approach{} are designed to take advantage of the fact that many
real-world MDSTs do very similar things when run on an arbitrary input as when
run on a shrunk version of that input.
When analyzing programs that do not have such a property, the technique
essentially reduces to traditional abstract interpretation~(\pref{alg:AbsInt}),
where the analysis power is controlled directly by the precision of the
abstract domain.
\tool{} does not use a particularly precise abstraction, so we do not expect or
claim it to work well on such non-MDST programs.
Determining whether the key insight of \approach{} can be useful in non-MDST
settings is an interesting area of future work.

\subsection{Nested Loops}
This paper only considered `singly nested' MDSTs, excluding, e.g.,
lists-of-strings that might also be traversed in a monotonic way, and we make
no claims about the performance of \tool{} on nested structures.
To support such nested MDSTs we would need to extend our heuristics for
stepping and adding scapegoats to handle such cases.
Alternatively, we could try verifying them in a compositional way, i.e.,
analyze just the inner structure first to determine lemmas about its behavior
that allow us to then analyze the outer structure independently.
These are interesting areas of future work but beyond the scope of this paper.

\subsection{Skip Traversals}
We have focused on MDSTs that iterate forward by a single element on each
iteration, but one could imagine MDSTs that move forward by a different
constant (or even variable) number of elements each iteration.
For example, searching in an array of integers where every pair of two adjacent
integers are considered a single key--value pair.
To profitably apply \tool{} to such programs, we would need to modify its
implementation of $\hMaybeAddScapegoats$ to drop more than one entry when
creating the scapegoat trace; in the earlier example, dropping the first two
entries (i.e., the first logical key--value pair) would suffice.

\subsection{More Precise Memory Model}
To simplify analyses, \tool{} currently rejects (reports unknown on) any
program that tries to reinterpret, say, a pointer to a struct as a pointer to a
different kind of struct or manipulate its byte value.
This disallows generic operations like \texttt{memcpy} that interpret arrays as
byte pointers, and intrusive generic data structures where a pointer to a
struct's field is subtracted from to get a pointer to the struct itself.
This is \emph{not} a fundamental limitation of \approach{}, and we believe that
our memory abstraction can be extended to support many such common operations.
The simplest way to add some support for such byte-level operations is to
detect them and reinterpret them in terms of our higher level representation,
e.g., we can detect when a constant number of bytes is subtracted from a
pointer and then interpret that as making it point to the corresponding
container element.
With more engineering effort, a more complete solution would involve modifying
\tool{} to use a byte-level abstraction of the heap, where every node in the
heap is subdivided into individual bytes that can be pointed to, read from, and
written to individually.
We believe that this would not be too difficult to implement within \tool{},
although tracking relations between individual bytes might slow down the
numerical domain reasoning.

\subsection{Performance on Tree Instances}
\label{app:TreeBad}
In our evaluation we saw that tree instances were particularly slow for \tool{}
to verify.
This is because \tool{} relies heavily on path splitting to keep the trace herd
abstractions precise, and tree traversals have an exponential blowup in the
number of possible paths because at each node you can go either left or right.
There is nothing in the theory of \approach{} that requires such aggressive
path splitting; hypothetically, $\hWiden$ could even be implemented to join the
abstract trace herd with all previously seen abstract trace herds, at the cost
of precision.

However, path splitting \emph{is} required for the current version of \tool{},
with its current heap abstraction and heuristics, to solve most the tree
benchmarks.
This is because \tool{} needs relatively precise information about the tree and
the primary trace's path in order to pick what scapegoats to add and prove that
they can be blamed (i.e., that they do ``essentially the same thing'' as the
primary).
For example, consider verifying a BST search routine, and suppose we know that
the primary trace went down the rightmost branch of the tree:
\begin{center}
    \begin{tikzpicture}
        \node (A) at (0, 0) {B};
        \node (B) at (-1, -1) {A};
        \node (C) at (1, -1) {D};
        \node (D) at (0, -2) {C};
        \node (E) at (2, -2) {F};
        \node (F) at (1, -3) {E};
        \node (G) at (3, -3) {G};
        \node (H) at (3.3, -3.3) {$\ddots$};
        \draw (A) -- (B);
        \draw (A) -- (C);
        \draw (C) -- (D);
        \draw (C) -- (E);
        \draw (E) -- (F);
        \draw (E) -- (G);

        \draw[->,red,dashed] (A) to[bend left=20] (C);
        \draw[->,red,dashed] (C) to[bend left=20] (E);
        \draw[->,red,dashed] (E) to[bend left=20] (G);
    \end{tikzpicture}
\end{center}
With this information, \tool{} can determine that the scapegoat trace resulting
from an input tree formed by dropping the right child of the root (and
replacing it with \emph{its} right child) results in similar enough behavior to
complete the \approach{}-based verification.
This is because it will take the same traversal, just skipping over the
iteration that would have touched the right child of the root:
\begin{center}
    \begin{tikzpicture}
        \node (A) at (0, 0) {B};
        \node (B) at (-1, -1) {A};
        \node (E) at (1, -1) {F};
        \node (F) at (0, -2) {E};
        \node (G) at (2, -2) {G};
        \node (H) at (2.3, -2.3) {$\ddots$};
        \draw (A) -- (B);
        \draw (A) -- (E);
        \draw (E) -- (F);
        \draw (E) -- (G);

        \draw[->,red,dashed] (A) to[bend left=20] (E);
        \draw[->,red,dashed] (E) to[bend left=20] (G);
    \end{tikzpicture}
\end{center}
Crucially, however, for herds where the primary trace takes a different path
through the tree, different scapegoats might be nedeed.
If the primary instead took the path B, D, C, \ldots, we would instead have to
replace D with its left child C, not its right child F.
For this reason, \tool{} relies on aggressive path splitting so that for every
abstract trace herd it processes, it knows enough about the primary trace's
path to add the proper scapegoat and prove that its trace is similar enough to
guarantee it can accept blame if the primary fails.

In theory, and with more engineering effort, a more precise abstract herd
domain might be developed that could obviate the need to do such aggressive
path splitting by representing parameterized constraints, e.g., that the
scapegoat trace is the result of dropping \emph{some} node on the primary
trace's traversal path, but keep which node exactly that is symbolic.
Unfortunately, we believe such an abstract domain would be significantly more
complicated to implement, and would work against the main benefit of
\approach{}, i.e., its ability to use simpler abstract domains to represent the
heap.

%

\subsection{More Precise Numerical Domains}
We used a custom integer difference logic (IDL) solver for the core of our
numerical abstraction~(\pref{sec:NumericalAbstraction}).
We did this to keep the tool self-contained, easy-to-build, and have a small
trusted computing base (TCB).
But there are already implementations of many abstract domains, including
octagons (similar to IDL), in production quality libraries like
APRON~\cite{apron}.
We could modify \tool{} to use a library like APRON, which would probably
improve performance and let us opt-in to more precise abstract domains as
desired, at the cost of our TCB size and perhaps the ease-of-use issue of
adding dependencies.

\subsection{Connections to the Small Scope Hypothesis}
The original motivation of this work was to better understand the \emph{small
scope hypothesis}.
The observation is that many programs \emph{feel}, intuitively, like they are
either correct, or they fail on some small input.
It strains credibility that a tiny, 5-line linked list search-and-delete
routine could be correct for all inputs up to size 1007, but fail on an input
of size 1008.
But our understanding of program verification says very little about why this
feeling should be justified.

The scapegoating size descent analysis technique sheds some light on this
mystery, because it works by proving that failure-inducing inputs are small.
Essentially, it leads to an explanation of the small scope hypothesis for
programs that do `essentially the same thing' on a smaller version of the
input.
Ultimately, we would like to adapt the results in this paper into a
\emph{syntactic} result of the form: any program in this syntactic class
satisfies the small scope hypothesis, i.e., is correct on all inputs if and
only if it is correct on all inputs of a certain size.

\end{document}